\def\version{23 January 2026}
\newcommand{\coupling}{\mathrm{g}}
\definecolor{MyDarkBlue}{rgb}{0,0.08,0.45}
\providecommand{\eprint}[1]{}
\renewcommand{\eprint}[1]{arXiv:\href{http://arxiv.org/abs/#1}{#1}}
\newcommand{\jj}{\mathrm{i}}
\DeclareOldFontCommand{\brianup}{\upshape}{\mathrm}
\DeclareSymbolFont{EUR}{U}{eur}{m}{n}
\DeclareSymbolFontAlphabet{\eur}{EUR}
\DeclareSymbolFont{EUB}{U}{eur}{b}{n}
\DeclareSymbolFontAlphabet{\eub}{EUB}
\DeclareSymbolFont{AMSb}{U}{msb}{m}{n}
\DeclareSymbolFontAlphabet{\mathbb}{AMSb}
\newcommand{\p}{\partial}
\newcommand{\at}[1]{\vert\sb{\sb{#1}}}
\def\R{\mathbb{R}}
\newcommand{\C}{\mathbb{C}}
\newcommand{\N}{\mathbb{N}}
\newcommand{\abs}[1]{\vert #1 \vert}
\newcommand{\norm}[1]{\Vert #1 \Vert}
\DeclareMathSymbol{\varGamma}{\mathord}{letters}{"00}
\DeclareMathSymbol{\varDelta}{\mathord}{letters}{"01}
\DeclareMathSymbol{\varTheta}{\mathord}{letters}{"02}
\DeclareMathSymbol{\varLambda}{\mathord}{letters}{"03}
\DeclareMathSymbol{\varXi}{\mathord}{letters}{"04}
\DeclareMathSymbol{\varPi}{\mathord}{letters}{"05}
\DeclareMathSymbol{\varSigma}{\mathord}{letters}{"06}
\DeclareMathSymbol{\varUpsilon}{\mathord}{letters}{"07}
\DeclareMathSymbol{\varPhi}{\mathord}{letters}{"08}
\DeclareMathSymbol{\varPsi}{\mathord}{letters}{"09}
\DeclareMathSymbol{\varOmega}{\mathord}{letters}{"0A}
\theoremstyle{plain}
\newtheorem{lemma}{Lemma}[section]
\newcommand\xqed[1]{%
	\leavevmode\unskip\penalty9999 \hbox{}\nobreak\hfill\quad\hbox{#1}%
}
\newcommand\remarkend{\xqed{$\triangle$}}
	\def\@endtheorem{\remarkend\endtrivlist\@endpefalse }
\theoremstyle{remark}
\newtheorem{remark}[lemma]{Remark}
\newtheorem*{remark*}{Remark}
	\def\@endtheorem{\endtrivlist\@endpefalse }
\renewcommand{\theequation}{\thesection.\arabic{equation}}
\makeatletter\@addtoreset{equation}{section}
\makeatletter\@addtoreset{theorem}{section}
\renewcommand{\Re}{\mathop{\rm{R\hskip -1pt e}}\nolimits}
\renewcommand{\Im}{\mathop{\rm{I\hskip -1pt m}}\nolimits}
\providecommand{\keywords}[1]
{
\noindent
\small \textbf{\textit{Keywords:}} #1
}
\begin{document}
\renewcommand{\theequation}{\thesection.\arabic{equation}}
\newcommand{\sect}[1]{\setcounter{equation}{0}\section{#1}}

\title{Numerical study
of solitary waves\\in
Dirac--Klein--Gordon system
}

\author{
Andrew Comech\,\orcidlink{0000-0003-2112-9359}
\\
{\small\it
Mathematics Department,
Texas A\&M University, College Station, TX 77843, USA}
\\[1ex]
Julien Ricaud\,\orcidlink{0000-0002-6777-7469}
\\
\small\it
Departamento de F\'{i}sica Matem\'{a}tica,
Instituto de Investigaciones en Matem\'aticas Aplicadas y en Sistemas,
\\
\small\it
Universidad Nacional Aut\'onoma de M\'exico, CP 04510, Ciudad de M\'exico, M\'exico
\\[1ex]
Marco Roque
\\
\small\it
Caver College of Aviation, 
Science, and Nursing,
Henderson State University, Arkadelphia, AR 71923
\\
\small\it
Arkansas State University,
State University, AR 72467
}

\date{\version}

\maketitle

\keywords{
Dirac--Klein--Gordon system,
Nonlinear Dirac equation,
solitary waves,
virial identity,
nested shooting method,
iterative method,
stability of solitary waves.
}

\begin{abstract}
We use numerics to construct solitary waves
$\phi_\omega(x) e^{-\jj\omega t}$
in Dirac--Klein--Gordon 
(in one and three spatial dimensions)
and study the dependence of energy and charge of $\omega$.
To construct solitary waves,
we use two different procedures:
the iterative method
and the nested shooting method.
We also consider the case of massless scalar
field where we show that the
standard shooting method becomes available.
We use the virial identities
to control the error of simulations.
We discuss possible implications
for the stability
of solitary waves.
\end{abstract}

\setlength{\cftbeforesecskip}{5pt}
\tableofcontents

\section{Introduction}

While the nonlinear Dirac equation
was written down by D.\,Ivanenko back in
1938 \cite{jetp.8.260}
and the nonlinear Klein--Gordon equation
by L.\,Schiff
in 1951 \cite{PhysRev.84.1,PhysRev.84.10},
the rigorous mathematical study
of nonlinear field equations
started forming in early sixties, already after
the triumph of Quantum Electrodynamics.
The Cauchy problem for the nonlinear wave equation
appears in \cite{jorgens1961anfangswertproblem,segal1963global}.
This research was followed by the study of
nonlinear scattering
(stability of zero solution)
\cite{MR0217453,MR0233062,MR0303097},
existence of solitary waves
\cite{strauss1977existence,berestycki1983nonlinear,cazenave1986existence,esteban1996stationary},
and then linear and orbital
stability of solitary waves
\cite{zakharov-1967,kolokolov-1973,cazenave1982orbital,grillakis1987stability} followed by the ongoing research on asymptotic stability
of solitary waves.

The results for systems
appear with a delay,
particularly so
because of the difficulties caused by
nonlocal type of self-interaction.
The Dirac--Klein--Gordon system
(DKG) -- the focus of the present article --
describes
fermions interacting with the
scalar boson field
via Yukawa coupling.
The ``Higgs portal''
theories suggest that this model
may be relevant for the description
of the Dark Matter
(see \cite{comech2025stable} and the references therein).
The relation
between the masses
of fermions and
bosons ($m$ and $M$, respectively)
in these theories
is presently not clear;
one possibility is to have light
fermions
interacting with the
Higgs field,
$m\ll M=125\,\mathrm{GeV}$.
On the other side of the spectrum,
$m\gg M$,
heavy fermions
could interact via exchange of
axions
in models based on axial currents
$\bar\psi\gamma^5\gamma^\mu\psi$
(where the Dirac conjugate is defined by
$\bar\psi=\psi^*\gamma^0$)
or via exchange of 
a
very light or massless spinless neutral boson
called majoron,
arising in models with spontaneous lepton number breaking,
based on Yukawa-type coupling terms
$\bar\psi\psi \varphi$
or $\bar\psi\gamma^5\psi \varphi$
\cite{lessa2007revisiting}.

DKG, Dirac--Maxwell (DM),
and similar systems
admit localized modes
(see \cite{esteban1996stationary,abenda1998solitary,MR2593110}),
which may or may not have physical relevance
in the description of nonperturbative effects of quantum field theories;
the stability
of localized solutions
is crucial for their possible applicability
in the description of certain physical phenomena.
Remarkably, the spectral stability of standing waves
$\phi_\omega(x)e^{-\jj\omega t}$
-- that is, the absence of linear instability --
is available
in the nonlinear Dirac equation (NLD):
see
\cite{berkolaiko2012spectral}
for the numeric results on spectral stability
in 1D
(corroborated numerically and
analytically in
\cite{arxiv:1707.01946,lakoba2018numerical,aldunate2023results})
and
\cite{PhysRevLett.116.214101,boussaid2019spectral}
for numeric and analytic results
on spectral stability in higher dimensions.
Orbital stability of solitary waves
has been shown
in a completely integrable massive Thirring model \cite{pelinovsky2014orbital,MR3462129}.
See also the monograph
\cite{boussaid2019nonlinear} for more details
and references.

The articles
\cite{comech2013polarons,comech2025stable}
suggest spectral stability of
some of the solitary waves in
DM and DKG systems.
To set up the groundwork for the research
in these two systems,
it is convenient to have some basic properties of
the corresponding solitary waves.
The problem is that solitary waves
in Dirac-based models
are not readily available:
there are no explicit formulas known
(except for 1D NLD),
the proof of their existence is indirect
\cite{esteban1996stationary},
and the asymptotic behavior is only available
in the nonrelativistic limit
(see
\cite{boussaid2017nonrelativistic}
for NLD for $\omega$ near $m$
and \cite{comech2018small}
for DM
for $\omega$ near $-m$).
This is why we have to rely on the numerics
to know the behavior of charge and energy
of solitary waves $\phi(x,\omega)e^{-\jj\omega t}$
as functions of frequency,
to be able to make at least some predictions
regarding stability:
the stability of small amplitude solitary waves
is inherited from the nonrelativistic limit of the model
(see, e.g., \cite{comech2014linear,boussaid2019spectral})
and then critical points of
the charge $Q(\omega)$
and zero values of $E(\omega)$
indicate the collision of eigenvalues of the
linearization operators at the origin and hence
possibly the border of the stability region
\cite{berkolaiko2015vakhitov}.

In the present article,
we give a numerical construction
of solitary waves
in DKG with Yukawa coupling.
Since one needs to specify two parameters
for the construction of DKG solitary waves,
the shooting method needs to be modified.
We develop the nested shooting method
which allows to adjust two parameters.
We also use the iterative procedure
which converges to DKG solitary waves.
In the massless case
(Dirac equation coupled to the wave equation),
we can also use the iterative approach
(albeit with a poor precision);
besides, the standard,
one-parameter shooting method
becomes available,
yielding results with much higher
accuracy.
We show that indeed both methods
are in agreement.
While we cannot yet produce
results on spectral stability,
we will already be able to make
some preliminary conclusions
about the location of the stability region
based on the dependence of the energy
of $\omega$.
We point out that while
there are no zero spin massless particles
in the standard model as of now
(the only massless particle
which is presently known
in High Energy Physics
is photon, a spin-$1$ particle),
the conclusion of the present paper is that
the massive spinor field coupled to massless
scalar field
forms localized states,
and we expect these states
to be stable
at least for
$\omega\lesssim m$
(``the nonrelativistic limit'').\footnote{
By ``$\omega\lesssim m$'' we mean that
$\omega\in(\omega_1,m)$
with some $\omega_1<m$.
}

Let us mention that the nonlinear Dirac and Dirac--Klein--Gordon equations
admit bi-frequency solitary wave solutions
of the form
$\phi(x)e^{-\jj\omega t} + \chi(x)e^{\jj\omega t}$
\cite{boussaid2018spectral}.
By
\cite{boussaid2024spectral,comech2025stable},
these bi-frequency solitary waves
have spectral stability properties
similar to those of standard, one-frequency solitary waves,
although it is only bi-frequency solitary waves that could be asymptotically stable.
Since the bi-frequency solitary waves
are readily produced
from one-frequency ones
(see Remark~\ref{remark-bi}
below),
in the present article we only
focus on the construction of
one-frequency solitary waves.

The DKG system is introduced in
Section~\ref{sect-dkg}.
The numerical setup for the case
of massive scalar field
is in Section~\ref{sect-dkg-numerics};
the massless case is considered in
Section~\ref{sect-dkg-numerics-massless}.
The stability properties
are discussed in Section~\ref{sect-stability}.
Relevant details about NLD
are presented in Appendix~\ref{sect-nld}.

\section{Dirac--Klein--Gordon system}
\label{sect-dkg}

The Dirac--Klein--Gordon system
describes
the spinor field
$\psi\in\C^N$
coupled to the real-valued
Klein--Gordon field $\varphi\in\R$:
\begin{equation}\label{dkg}
    \begin{cases}
        \jj\p\sb{t}\psi
        =D_0\psi+(m-\coupling \varphi)\beta\psi\,,
        \qquad
        \psi(t,x)\in\C^N,
        \ \ x\in\R^n,
        \\
        (\p_t^2-\Delta+M^2)
        \varphi
        =\psi^*\beta\psi\,,
        \qquad
        \varphi(t,x)\in\R\,.
    \end{cases}
\end{equation}
Above, $\coupling>0$ is the coupling constant,
$m>0$ is the mass of the spinor field,
and $M\ge 0$ is the mass of the boson field
(we note that we allow $M=0$;
in this case, formally one may call
the system
\eqref{dkg} \emph{the Dirac\,--\,wave equation};
$D_0 = -\jj\bm\alpha\cdot\nabla$
is the Dirac operator.
The Dirac matrices
$\alpha^j$ ($1\le j\le n$)
and $\beta$
are self-adjoint
and satisfy
\begin{equation}\label{anti-alpha}
\alpha^j\alpha^k
+\alpha^k\alpha^j=2\delta_{j k} I_N\,,
\qquad
\alpha^j\beta
+\beta\alpha^j=0\,,
\qquad
1\le j,\,k\le n\,;
\qquad
\beta^2=I_N\,;
\end{equation}
We focus on the 3D case,
taking $n=3$ and $N=4$
(see also Remark~\ref{remark-nd-1}
below),
taking the Dirac matrices in the standard form
\[
    \alpha^j
    =
    \begin{bmatrix} 0&\sigma_j
    \\ \sigma_j&0\end{bmatrix},
    \quad
    \beta
    =
    \begin{bmatrix} I_2&0 \\ 0&-I_2\end{bmatrix};
    \qquad
    \sigma_1=\begin{bmatrix}0&1\\1&0\end{bmatrix},
    \quad
    \sigma_2=\begin{bmatrix}0&-\jj\\\jj&0\end{bmatrix},
    \quad
    \sigma_3=\begin{bmatrix}1&0\\0&-1\end{bmatrix}.
\]

\begin{remark}
\label{remark-coupling}
One can see that the constant $\coupling$ in~\eqref{dkg}
can be assumed to be equal to $1$.
Indeed, 
$\tilde\psi=\coupling^{1/2}\psi$ and
$\tilde \varphi=\coupling \varphi$ satisfy
the same system \eqref{dkg}
but with $\coupling=1$.
So, to consider physically
different cases, it is
enough to take e.g. $m=1$
(which could be achieved by rescaling $x$)
and only consider different
values of $M\ge 0$.
All other cases are obtained by scaling and could be ignored.
\end{remark}

\begin{remark}\label{remark-close}
Formally, the system \eqref{dkg}
turns into the nonlinear Dirac equation
(see \eqref{nld} below)
in the limit
$\coupling=M^2\to+\infty$.
\end{remark}

In the 3D case,
by \cite{esteban1996stationary},
for any $\omega\in(0,m)$,
there exist solitary wave solutions
\[
    (\psi(t,x),\,\varphi(t,x))
    =\bigl(\phi(x,\omega)e^{-\jj\omega t},\,h(x,\omega)\bigr)
\]
to
\eqref{dkg},
with
$\phi\in\C^4$ and $h\in\R$
satisfying
\begin{equation}\label{dkg-stationary}
    \begin{cases}
    \omega\phi
    =D_0\phi+(m-\coupling h)\beta\phi\,,
    \\
    (-\Delta+M^2)h= \phi^*\beta\phi\,,
    \end{cases}
\end{equation}
with $\phi(x,\omega)$
in the form of the first Wakano Ansatz~\cite{wakano-1966},
\begin{equation}\label{sw-1}
    \phi(x,\omega)
    =
    \begin{bmatrix}
    v(r,\omega)\begin{bmatrix}1\\0\end{bmatrix}
    \\
    \jj u(r,\omega)
    \begin{bmatrix}\cos\theta\\e^{\jj\upvarphi}\sin\theta\end{bmatrix}
    \end{bmatrix}
    ,
\end{equation}
where $(r,\theta,\upvarphi)$
are the spherical coordinates.

\begin{remark}
\label{remark-bi}
By
\cite{boussaid2018spectral},
the existence in NLD or in DKG
of a solitary wave
with $\phi$ of the form
\eqref{sw-1}
implies the existence of
bi-frequency solitary wave solutions  
of the form
\begin{align}
\begin{bmatrix}
v(r,\omega)\bm\xi
\\
\jj u(r,\omega)\sigma_r\bm\xi
\end{bmatrix}
e^{-\jj\omega t}
+
\begin{bmatrix}
-\jj u(r,\omega)\sigma_r\bm\eta
\\
v(r,\omega)\bm\eta
\end{bmatrix}
e^{\jj\omega t},
\qquad
\bm\xi,\,\bm\eta\in\C^2,
\quad
\norm{\bm\xi}^2-\norm{\bm\eta}^2=1,
\end{align}
where
$
\sigma_r
:=
\abs{x}^{-1}x\cdot\bm\sigma
=
\begin{bmatrix}
\cos\theta&e^{-\jj\upvarphi}\sin\theta
\\
e^{\jj\upvarphi}\sin\theta
&-\cos\theta
\end{bmatrix}$
the Pauli matrix in spherical coordinates.
The stability properties
of bi-frequency solitary waves
seem related (but not identical)
to those of standard,
one-frequency solitary waves.
For more details,
see \cite{comech2025stable}.
\end{remark}

\begin{remark}\label{remark-nd-1}
For the generalization of the Wakano
Ansatz \eqref{sw-1}
to any dimension $n\ge 1$
see, e.g.,
\cite[Chapter XII]{boussaid2019nonlinear}.
\end{remark}

\begin{lemma}
Let $n\in\N$.
There are no solitary wave solutions
to \eqref{dkg}
with $\omega\in(-m,m)$,
$\phi\in L^2(\R^n,\C^N)$,
$h\in L^\infty(\R^n)$,
such that
$\phi^*\beta\phi\le 0$
everywhere in $\R^n$.
\end{lemma}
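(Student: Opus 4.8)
The plan is to extract a scalar identity from the Dirac (first) equation of the stationary system and combine it with the sign of $h$ forced by the Klein--Gordon (second) equation. A solitary wave $(\phi e^{-\jj\omega t},h)$ satisfies \eqref{dkg-stationary}, so I work with that. As a preliminary, $\phi$ is automatically regular: writing the Dirac equation as $D_0\phi=\big(\omega-(m-\coupling h)\beta\big)\phi$ and using $h\in L^\infty$, the right-hand side lies in $L^2$, so $\phi\in\mathrm{Dom}(D_0)=H^1(\R^n,\C^N)$. This makes every integral below finite and legitimizes the self-adjointness of $D_0$.

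The key step is to pair the Dirac equation with $\beta\phi$ in $L^2(\R^n,\C^N)$. Using $\beta^*=\beta$, $\beta^2=I$, and $\bar\phi=\phi^*\beta$, this produces
\[
\omega\int_{\R^n}\bar\phi\phi\,dx=\langle\beta\phi,D_0\phi\rangle+\int_{\R^n}(m-\coupling h)\abs{\phi}^2\,dx .
\]
Since $\beta D_0=-D_0\beta$ (a consequence of $\alpha^j\beta+\beta\alpha^j=0$) and $\beta$, $D_0$ are self-adjoint, the cross term satisfies $\langle\beta\phi,D_0\phi\rangle=-\overline{\langle\beta\phi,D_0\phi\rangle}$ and is therefore purely imaginary; as the other two terms in the display are real, it must vanish. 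This leaves the clean identity
\[
\omega\int_{\R^n}\bar\phi\phi\,dx=\int_{\R^n}(m-\coupling h)\abs{\phi}^2\,dx ,
\]
which carries no kinetic term and interchanges the roles of $\bar\phi\phi$ and $\abs{\phi}^2$ between the two sides; this sign structure is what makes the argument close.

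It remains to use the Klein--Gordon equation. As $\phi\in L^2$ we have $\bar\phi\phi\in L^1$, and $(-\Delta+M^2)h=\bar\phi\phi\le0$ together with the positivity of the Green's function of $-\Delta+M^2$ (equivalently, the maximum principle) forces $h\le0$, whence $m-\coupling h\ge m$ pointwise. Hence the right-hand side of the identity is bounded below by $m\|\phi\|_{L^2}^2$, while its left-hand side satisfies $\big|\omega\int_{\R^n}\bar\phi\phi\,dx\big|\le\abs{\omega}\,\|\phi\|_{L^2}^2<m\|\phi\|_{L^2}^2$, using $\abs{\bar\phi\phi}\le\abs{\phi}^2$ and $\abs{\omega}<m$. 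These are incompatible unless $\phi\equiv0$, in which case the Klein--Gordon equation reduces to $(-\Delta+M^2)h=0$ and forces $h\equiv0$; hence no nontrivial solitary wave exists.

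I expect the only genuine obstacle to be the nonpositivity $h\le0$, which with merely $h\in L^\infty$ needs some care: for $M>0$ it is immediate from uniqueness of bounded solutions and positivity of the Yukawa kernel, whereas for $M=0$ one invokes either the decay of the solitary-wave profile (so that $h$ is the Newtonian potential of $\bar\phi\phi\le0$, again nonpositive) or, in dimensions $n\le2$, a Liouville argument for the bounded subharmonic function $h$. The algebraic identity itself is routine once one pairs with $\beta\phi$, so the real analytic content lies in this sign of $h$ and in the elementary regularity $\phi\in H^1$.
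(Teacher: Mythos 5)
Your proof is correct, but it reaches the contradiction by a genuinely different route than the paper's. The two arguments share the first step: from $\bar\phi\phi\le 0$ and the positivity of the kernel of $(-\Delta+M^2)^{-1}$ one concludes $h\le 0$. From there the paper works in the standard representation, splitting $\phi$ into upper and lower blocks $(\phi_P,\phi_A)$, solving the second block equation as $\phi_A=(\omega+m-\coupling h)^{-1}(-\jj\bm\sigma\cdot\nabla)\phi_P$ (legitimate since $\omega+m-\coupling h\ge\omega+m>0$), substituting into the first, and pairing with $\phi_P$: the left-hand side $\int(\omega-m+\coupling h)\abs{\phi_P}^2$ is strictly negative while the right-hand side $\int(\omega+m-\coupling h)^{-1}\abs{\bm\sigma\cdot\nabla\phi_P}^2$ is nonnegative. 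Your pairing with $\beta\phi$ avoids the block structure entirely: the anticommutation $\beta D_0=-D_0\beta$ forces the kinetic pairing to be purely imaginary and hence to vanish, giving the clean identity $\omega\int\bar\phi\phi=\int(m-\coupling h)\abs{\phi}^2$, and then $\abs{\bar\phi\phi}\le\abs{\phi}^2$ together with $\abs{\omega}<m$ and $m-\coupling h\ge m$ closes the argument. What your route buys: it is representation-independent (it never uses that $\beta$ is diagonal and the $\alpha^j$ off-diagonal, only the algebra \eqref{anti-alpha}), it works verbatim for any $N$ and any $n$, it needs only $\phi\in H^1$ (which you justify) rather than the $H^2$ the paper asserts, and it requires no division by $\omega+m-\coupling h$. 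What the paper's route buys is mainly structural transparency: the elimination exhibits the Schr\"odinger-type reduction that underlies other arguments in this area. One caveat, which is \emph{shared} by both proofs: identifying the bounded solution $h$ with the potential $(-\Delta+M^2)^{-1}\bar\phi\phi$ is immediate only for $M>0$ (uniqueness of tempered solutions of the homogeneous equation); for $M=0$ the additive-constant ambiguity in $h$ — which the paper itself exploits in its massless section — is a real issue, and your closing remarks acknowledge and partially address it, whereas the paper glosses over it.
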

\begin{proof}
Denote the upper and the lower
parts of the solitary wave profile
$\phi$ by $\phi_P,\,\phi_A\in\C^{N/2}$.
Then one has:
\[
    \omega
    \begin{bmatrix}\phi_P\\\phi_A\end{bmatrix}
    =
    \begin{bmatrix}0&
    -\jj\bm\sigma\cdot\nabla
    \\
    -\jj\bm\sigma\cdot\nabla
    &0\end{bmatrix}
    \begin{bmatrix}\phi_P\\\phi_A\end{bmatrix}
    +
    (m-\coupling h)
    \begin{bmatrix}\phi_P\\\phi_A\end{bmatrix},
    \qquad
    (-\Delta+M^2) h=\phi^*\beta\phi\,.
\]
Let
$\phi\in L^2(\R^n,\C^N)$,
$h\in L^\infty(\R^n,\R)$;
it follows that
$\phi\in H^2(\R^n,\C^N)$.
If we assume that
$\phi^*\beta\phi\le 0$ everywhere
in $\R^n$,
then also
$h=
(-\Delta+M^2)^{-1}
\phi^*\beta\phi\le 0$
(due to the positivity of the integral kernel
which corresponds to the
Bessel potential $(-\Delta+M^2)^{-1}$).
Writing
\[
    (\omega-m+\coupling h)\phi_P
    =-\jj\bm\sigma\cdot\nabla\phi_A,
    \qquad
    (\omega+m-\coupling h)\phi_A
    =-\jj\bm\sigma\cdot\nabla\phi_P\,,
\]
we conclude that
\[
    (\omega-m+\coupling h)\phi_P
    =-\jj\bm\sigma\cdot\nabla
    \frac{1}{\omega+m-\coupling h}
    (-\jj\bm\sigma\cdot\nabla)\phi_P\,.
\]
Coupling this with $\phi_P$ and
integrating and taking
into account that
$\omega+m-\coupling h\ge\omega+m>0$,
$\omega-m+\coupling h\le\omega-m<0$,
we see that the left-hand side
is strictly negative while the
right-hand is strictly positive, thus
a contradiction.
\end{proof}

Our goal is to compute numerically
the charge and the energy
corresponding to the solitary waves
$\phi(x,\omega)e^{-\jj\omega t}$,
with the corresponding boson field
$h(x,\omega)$:
\begin{align}\label{def-Q}
    Q(\phi)
    &=\int_{\R^n}\phi^*\phi\,,
    \\
    E(\phi,h)
    &=\int_{\R^n}
    \Bigl(\phi^*D_0\phi
    +m \phi^*\beta\phi
    -\coupling h\,\phi^*\beta\phi
    +
    \frac{\coupling}{2}\abs{\nabla h}^2
    +\frac{\coupling M^2}{2}h^2
    \Bigr)
    \nonumber
    \\
    &=K(\phi)+N(\phi)+V(\phi,h)
    +T(h)+W(h)\,,
\end{align}
where
\begin{equation}
\label{def-kn}
    K(\phi)
    =\int_{\R^n}
    \phi^* D_0\phi\,,
    \qquad
    N(\phi)
    =m\int_{\R^n} \phi^*\beta\phi\,,
\end{equation}
\begin{equation}\label{def-tvw}
    T(h)=\frac{\coupling}{2}
    \int_{\R^n} \abs{\nabla h}^2\,,
    \qquad
    V(\phi,h)=-\coupling\int_{\R^n} h\,\phi^*\beta\phi\,,
    \qquad
    W(h)=\frac{\coupling M^2}{2}\int_{\R^n} h^2\,.
\end{equation}
There are 
immediate identities
that follow from
\eqref{dkg-stationary}
by multiplication with
$\phi^*$ and $h$,
respectively, and integration:
\begin{equation}\label{short-2}
    \omega Q(\phi)
    =K(\phi)+N(\phi)+V(\phi,h)\,,
    \qquad
    2T(h)+2W(h)=-V(\phi,h)\,.
\end{equation}
The relations
\eqref{short-2}
allow us
to derive the
expression
\begin{equation}\label{dkg-energy}
    E(\phi,h) = K(\phi)+N(\phi)+V(\phi,h)+T(h)+W(h) =\omega Q(\phi)+T(h)+W(h)
\end{equation}
for the total energy and alternatively also
\begin{equation}
\label{dkg-energy-no-derivatives}
    E(\phi,h) =\omega Q(\phi)-\frac{1}{2}V(\phi,h)\,,
\end{equation}
which we will use in the numerics to compute $E$.

The system
\eqref{dkg-stationary}
can be written as
the following relations
for variational derivatives:
\begin{align}
    \label{dkg-variational-1}
    &
    \omega \delta_{\phi^*} Q
    =\delta_{\phi^*} K+\delta_{\phi^*} N
    +\delta_{\phi^*} V\,,
    \\[1ex]
    \label{dkg-variational-2}
    &
    \delta_h T+\delta_h W
    =-\delta_h V\,.
\end{align}
Above,
$\delta_{\phi^*}$ and $\delta_h$
are the variational derivatives
with respect to $\phi^*$ and $h$
(more rigorously, one defines
$\delta_{\phi}
=\frac 1 2(\delta_{\Re\phi}
-\jj\delta_{\Im\phi})$
and
$\delta_{\phi^*}
=\frac 1 2(\delta_{\Re\phi}
+\jj\delta_{\Im\phi})$).
Considering the dependence
of $\phi$
and $h$ of $\omega$
and
coupling
equation \eqref{dkg-variational-1}
with $\p_\omega\phi^*$,
adding to the result its complex conjugate,
that is,
the relation
\[
    \omega\langle\delta_\phi Q,\p_\omega\phi\rangle
    =
    \omega\langle\delta_\phi K,\p_\omega\phi\rangle
    +
    \omega\langle\delta_\phi N,\p_\omega\phi\rangle
    +
    \omega\langle\delta_\phi V,\p_\omega\phi\rangle\,,
\]
and then subtracting
\eqref{dkg-variational-2}
coupled with $\p_\omega h$,
one arrives at the relation
\begin{equation}\label{dkg-dedq}
    \omega \p_\omega Q
    =
    \p_\omega
    (K+N+V+T+W)
    =\p_\omega E(\omega)\,,
\end{equation}
where
$Q$, $K$, $V$, $T$, $W$,
and
$E$
evaluated at
$\phi(\omega),\,h(\omega)$
are considered as functions
of $\omega$.
We took into account that
\begin{equation}\label{vvv}
    \p_\omega V(\phi,h)
    =
    \langle \delta_h V,\p_\omega h \rangle
    +
    \langle \delta_{\phi^*} V,\p_\omega \phi^* \rangle
    +
    \langle \delta_{\phi} V,\p_\omega \phi\rangle\,.
\end{equation}

\subsubsection*{The virial identity for Dirac--Klein--Gordon system}

Denote
\[
    \phi_\lambda(x,\omega)=\phi(x/\lambda,\omega)\,,
    \qquad
    h_\lambda(x,\omega)=h(x/\lambda,\omega)\,,
    \qquad
    x\in\R^n,
    \quad\omega\in(0,m)\,,
    \quad
    \lambda>0\,.
\]
Coupling
\eqref{dkg-variational-1}
with
$\p_\lambda\phi_\lambda^*$
at $\lambda=1$
and adding to the result
its complex conjugate,
coupling
\eqref{dkg-variational-2}
with $\p_\lambda h_\lambda$
at $\lambda=1$,
and taking the difference,
one arrives at the relation
\begin{equation}
    \omega \p_\lambda\at{\lambda=1} Q(\phi_\lambda)
    =
    \p_\lambda\at{\lambda=1}
    \bigl(K(\phi_\lambda)+N(\phi_\lambda)
    +V(\phi_\lambda,h_\lambda)
    +T(h_\lambda)+W(h_\lambda)\bigr)\,.
\end{equation}
Here, we used the equivalent of \eqref{vvv}
for $\p_\lambda V(\phi_\lambda,h_\lambda)$.
Taking into account
scaling of the expressions
for $Q$, $K$, $N$, $V$, $T$,
and $W$
(cf. \eqref{nld-scaling}),
one arrives at the
virial identity
\begin{equation}\label{dkg-virial}
    \omega Q(\phi)
    =\frac{n-1}{n}K(\phi)+N(\phi)+V(\phi,h)+\frac{n-2}{n}T(h)+W(h)\,.
\end{equation}
Using the definition~\eqref{dkg-energy} of $E$,
we can write
the above virial identity as
\begin{equation}\label{wcr}
    \omega Q(\phi)
    =E(\phi,h)-\frac{1}{n}K(\phi)-\frac{2}{n}T(h)\,.
\end{equation}
Substituting $T(h)=-V(\phi,h)/2-W(h)$ from \eqref{short-2} into~\eqref{dkg-virial},
we can also write the virial identity
in the form
\begin{equation}\label{for-epsilon}
    \omega Q(\phi)
    =\frac{n-1}{n}K(\phi) + N(\phi) +\frac{n+2}{2n}V(\phi,h)+\frac{2}{n}W(h)\,.
\end{equation}
Using again
\eqref{short-2} in order to eliminate $K(\phi)$ from~\eqref{for-epsilon}, we can also write
the virial identity in the following equivalent form
(we will use this expression
for the error estimates):
\begin{equation}
\label{dkg-virial-noderivatives}
    \omega Q(\phi)
    =
    N(\phi)+\frac{4-n}{2} V(\phi,h) + 2 W(h)
\end{equation}
or, using~\eqref{dkg-energy-no-derivatives}, finally as
\begin{equation}
\label{dkg-virial-noderivatives-with-E}
    E(\phi,h) =\omega Q(\phi)-\frac{1}{2}V(\phi,h) = N(\phi) + \frac{3-n}{2} V(\phi,h) + 2 W(h)\,.
\end{equation}

\begin{lemma}
\label{lemma-en}
Nonzero solitary wave solutions
to the Dirac--Klein--Gordon equation
in $\R^n$, $n\in\N$,
with $\omega\ge 0$
satisfy
$E(\omega)>0$
and $N(\omega)>0$.
\end{lemma}

\begin{proof}
We write the first relation
in \eqref{short-2}
and \eqref{dkg-virial}
as
\[
N-\omega Q=-K-V,
\qquad
N-\omega Q=
-\frac{n-1}{n}K-V-\frac{n-2}{n}T-W.
\]
Excluding $K$, we have:
\begin{align*}
(2n-1)(N-\omega Q)
&=
-(2n-1)V
-(n-2)T-n W
=3n T+(3n-2)W;
\end{align*}
in the second equality,
we used the second relation
from \eqref{short-2}.
Now the positivity of $N$ follows from
$T>0$ and $W\ge 0$
(we note that one can have $W=0$ if
the mass $M$ of the scalar field vanishes).
Positivity of
$E(\omega)$ for $\omega\ge 0$
follows from
\eqref{dkg-energy-no-derivatives},
where $V<0$
in view of the second
relation in \eqref{short-2}.
\end{proof}

\begin{lemma}
\label{lemma-no-negative-omega}
Nonzero solitary wave solutions
to the Dirac--Klein--Gordon equation in 
$\R^n$, $n\ge 2$,
satisfy
\[
K(\omega)\ge 0;
\]
the inequality becomes strict
if
additionally
$M>0$ or $n\ge 3$.
\end{lemma}

\begin{proof}
Subtracting from \eqref{dkg-virial}
the first relation from \eqref{short-2}
yields
\[
 0=-\frac{1}{n}K(\omega)
 +\frac{n-2}{n}T(\omega)+W(\omega)\,.
\]
Since
$T>0$ and $W\ge 0$,
the above relation for $n\ge 2$ leads to
$K(\omega)\ge 0$.
If either $M>0$ (so that $W>0$)
or $n\ge 3$,
the inequality becomes strict.
\end{proof}

By
\cite{berkolaiko2015vakhitov},
the collision of eigenvalues
of the linearized operator
at the origin
is characterized by
the Vakhitov--Kolokolov
condition
$\p_\omega Q(\omega)=0$
and by the energy vanishing condition
$E(\omega)=0$;
these two indicate the jump in the
dimension of the Jordan block
corresponding to eigenvalue $\lambda=0$.
One can see on
Figure~\ref{fig-3d}
that the energy of solitary waves
remains positive for all $\omega$,
while the minimum of $E(\omega)$
-- located at approximately
$\omega=0.936m$ in the case
of the Soler model (the limit $M^2\to+\infty$)
--
for finite values of $M$
moves closer to
$\omega=m$,
suggesting larger value of the
right boundary point of the spectral stability region.

\section{Numerical construction of solitary waves}
\label{sect-dkg-numerics}

Substituting solitary waves
of the form \eqref{sw-1}
into the system
\eqref{dkg-stationary},
we see that
$v(r,\omega)$, $u(r,\omega)$,
and $h(r,\omega)$
are to satisfy the system
\begin{equation}\label{system-dkg-r}
\begin{cases}
        \p_r u
        =
        -\frac{n-1}{r}u-(m-\omega-\coupling h)v\,,
        \\
        \p_r v=
        -(m+\omega-\coupling h)u\,,
        \\
        \bigl(
        -\p_r^2-\frac{n-1}{r}\p_r
        +M^2\bigr)h=v^2-u^2\,,
    \end{cases}
    \qquad
    r>0\,.
\end{equation}

We focus on the 3D case, $n=3$,
in view of potential
applications in physics.
Let us nonetheless mention
that the 1D case is
readily accessible via both the iterative and the nested shooting methods; see Fig.~\ref{fig-1d}--\ref{fig-1-1.00} in Appendix~\ref{sect_appendix_Tables_3D},
and other dimensions via the nested shooting method.

\begin{remark}\label{remark-nd-2}
In the case $n=1$,
one considers \eqref{system-dkg-r}
with $x\in\R$ instead of $r>0$.
To arrive at this system,
one takes
$\phi(x,\omega)=\begin{bmatrix}v(x,\omega)\\u(x,\omega)\end{bmatrix}$
in place of \eqref{sw-1}
and substitutes $\alpha^1$ by $-\sigma_2$
and $\beta$ by $\sigma_3$.
For more details and for other dimensions,
we refer to
\cite[Chapters IX,\,XII]{boussaid2019nonlinear}.
\end{remark}

We obtain solitary waves, with excellent agreement between both methods, for various values of $M$ in~\eqref{system-dkg-r}.
Fig.~\ref{fig-3-1.00} and~\ref{Figure-3D-NestedshootingMethod-M=1} present for the iterative and the nested shooting methods, respectively, the profiles of the solitary waves for $M=1$ and for several values of $\omega$.
\begin{figure}[!hbt]
\ifpdf
\noindent\includegraphics[width=0.46\textwidth,height=140pt]{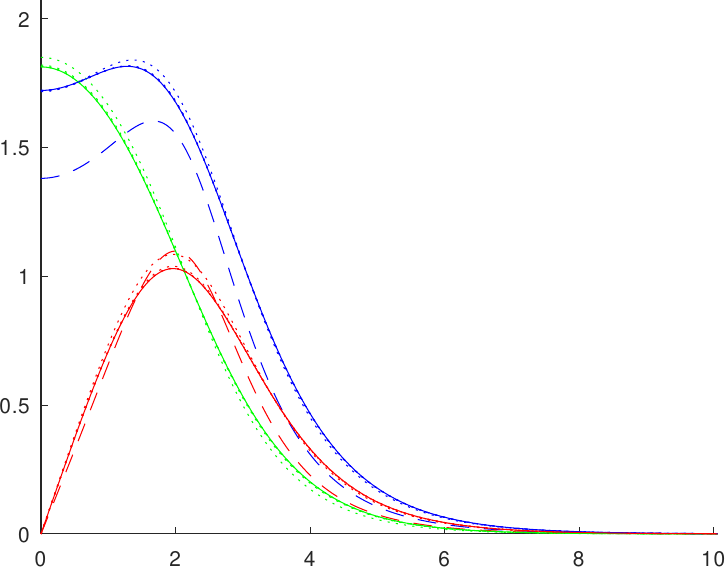}
\hfill\includegraphics[width=0.46\textwidth,height=140pt]{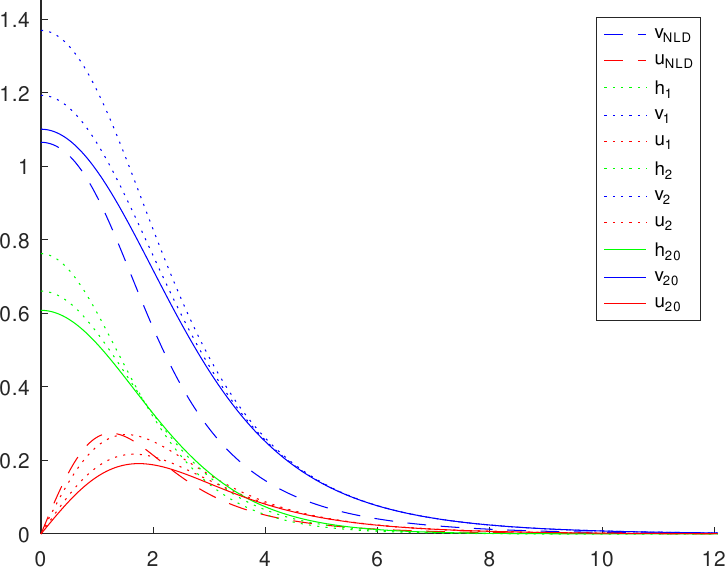}
\else
\noindent\includegraphics[width=0.46\textwidth,height=140pt]{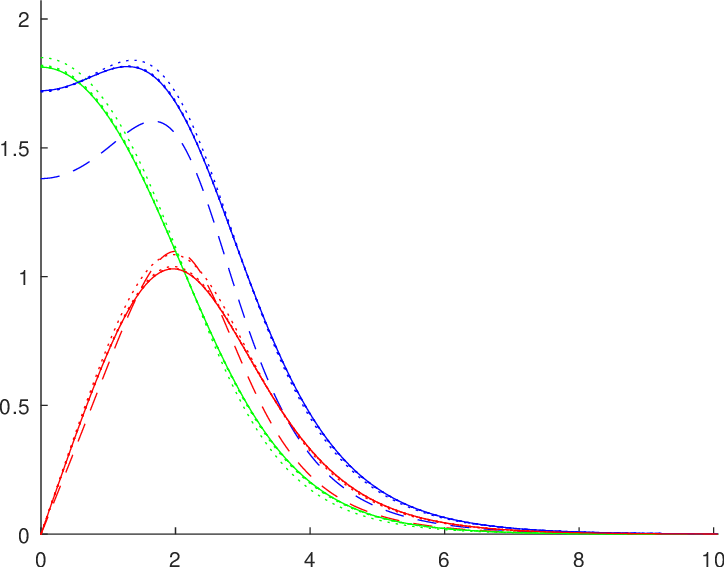}
\hfill\includegraphics[width=0.46\textwidth,height=140pt]{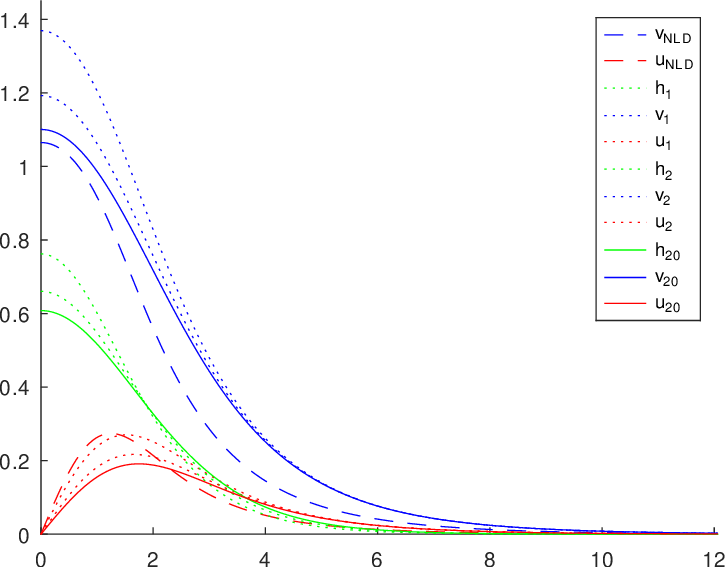}
\fi

\caption{\footnotesize
3D, the iterative method;
$m=1$, $\coupling=1$, $M=1$;
$\omega=0.5$ (left)
and
$\omega=0.9$ (right).
Solitary waves of cubic NLD
(dashed)
and iterations
of solitary waves of
DKG system:
first and second (dotted)
and
twentieth iterations are plotted.
}
\label{fig-3-1.00}
\end{figure}

\begin{figure}[!ht]
\ifpdf
\noindent\includegraphics[width=0.46\textwidth,height=140pt]{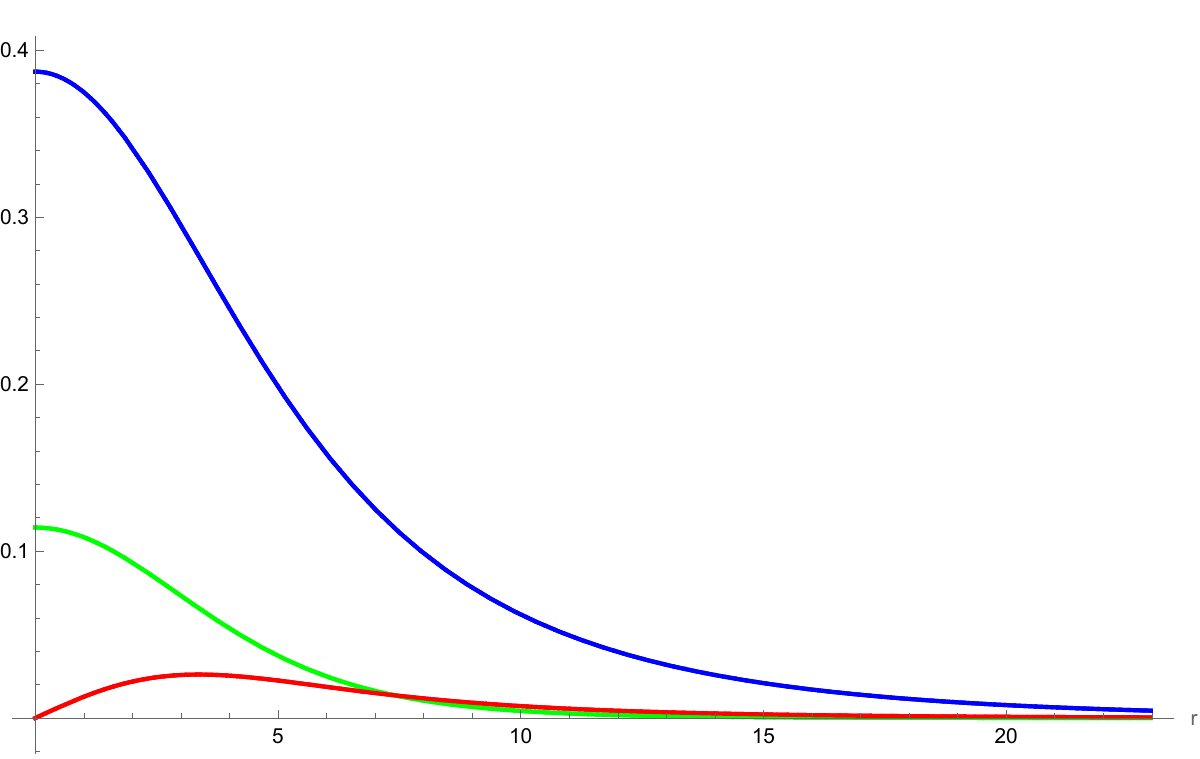}
\hfill\includegraphics[width=0.46\textwidth,height=140pt]{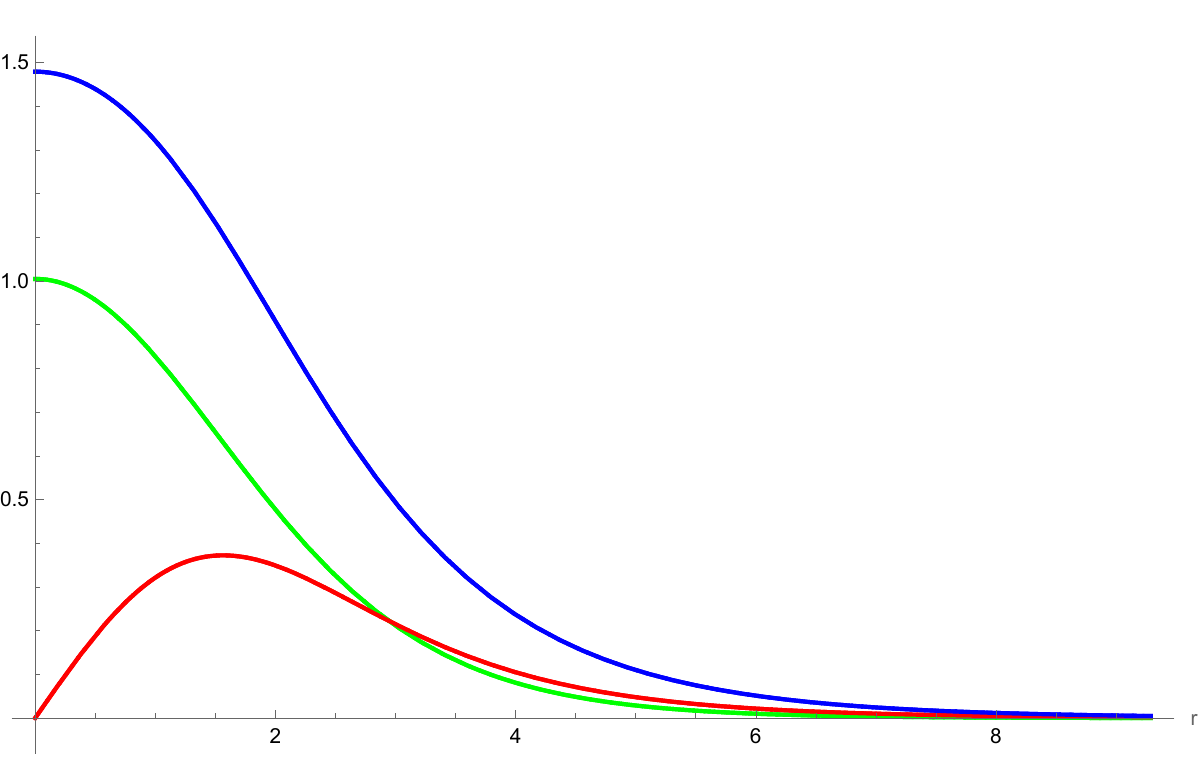}
\else
\noindent\includegraphics[width=0.46\textwidth,height=140pt]{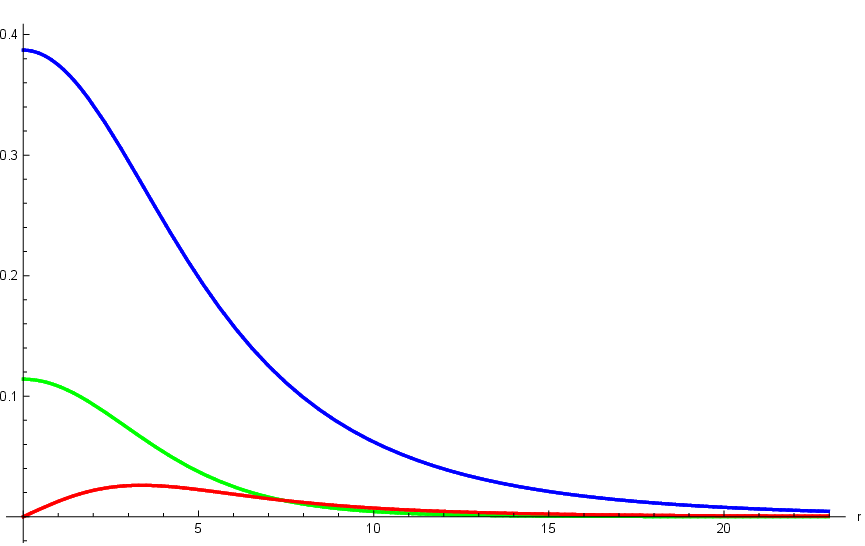}
\hfill\includegraphics[width=0.46\textwidth,height=140pt]{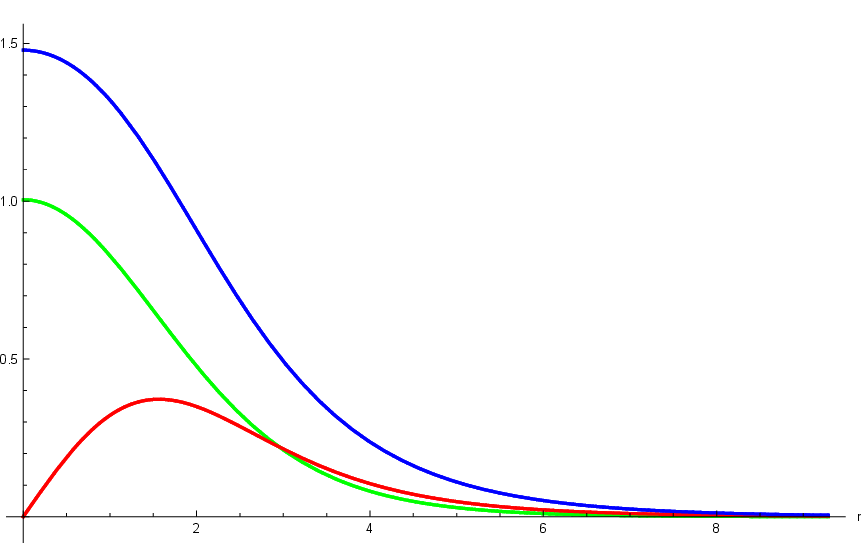}
\fi
    
\ifpdf
\noindent\includegraphics[width=0.46\textwidth,height=140pt]{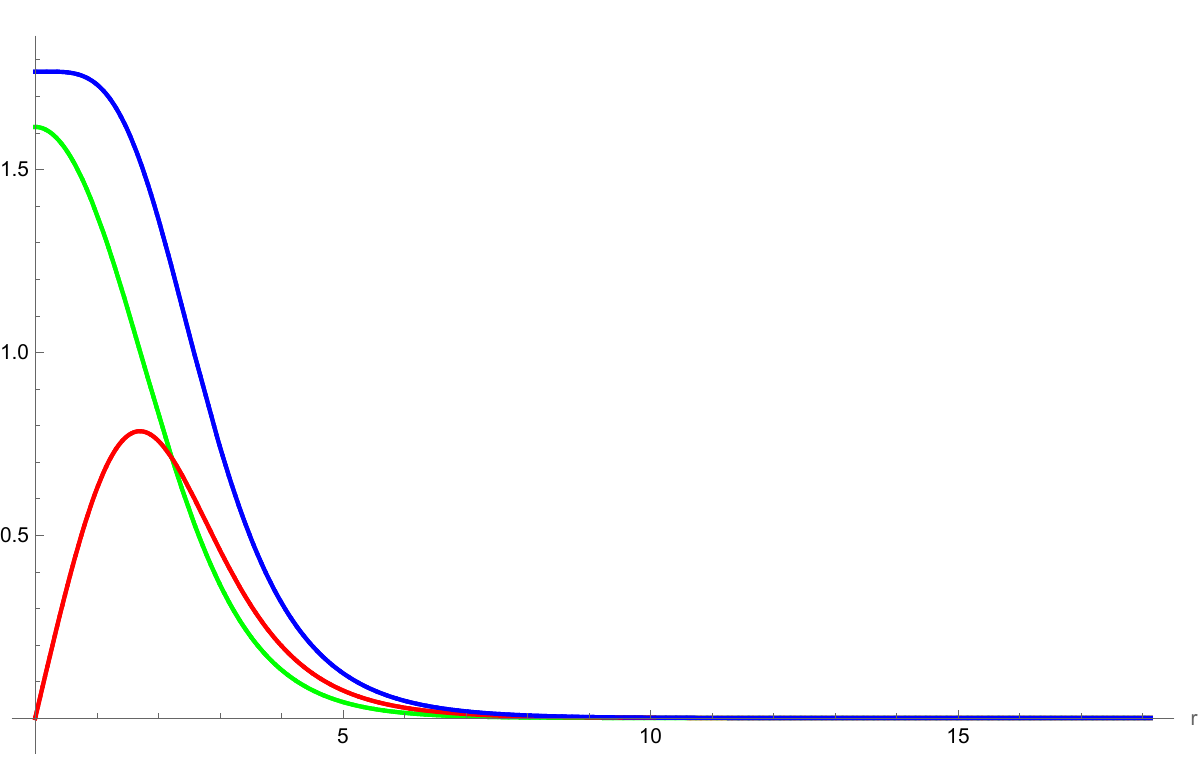}
\hfill\includegraphics[width=0.46\textwidth,height=140pt]{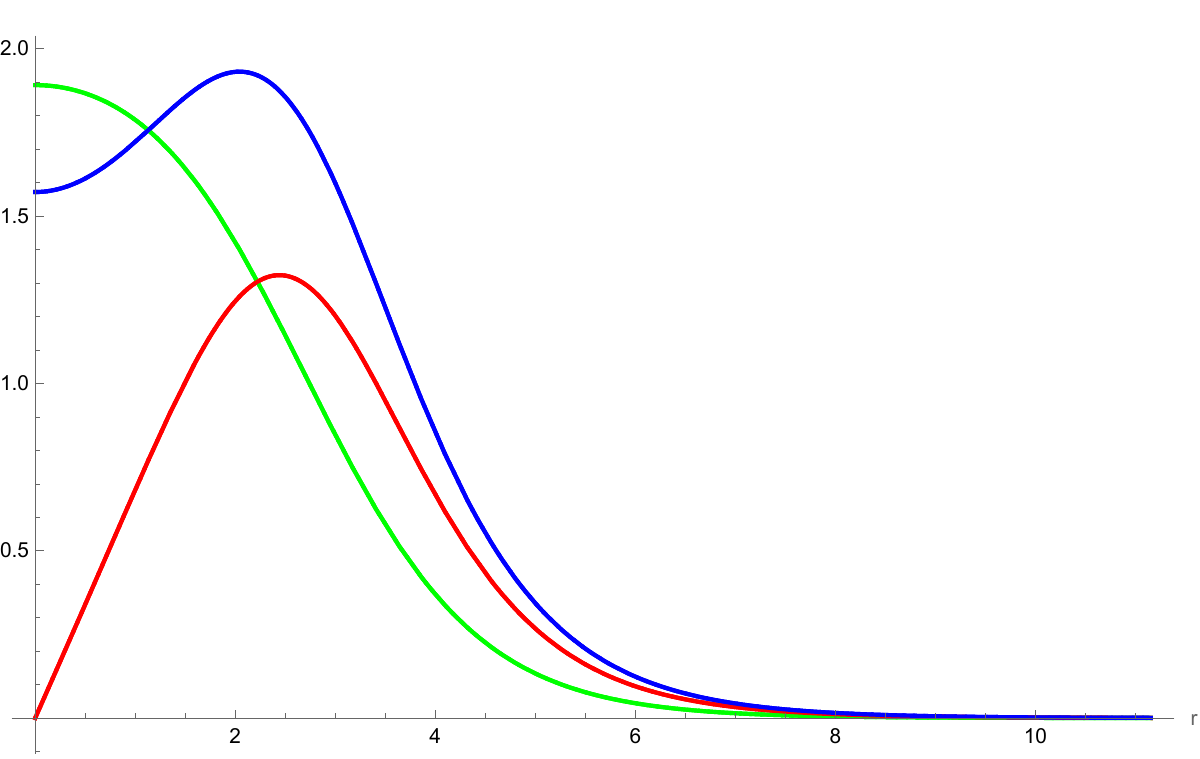}
\else
\noindent\includegraphics[width=0.46\textwidth,height=140pt]{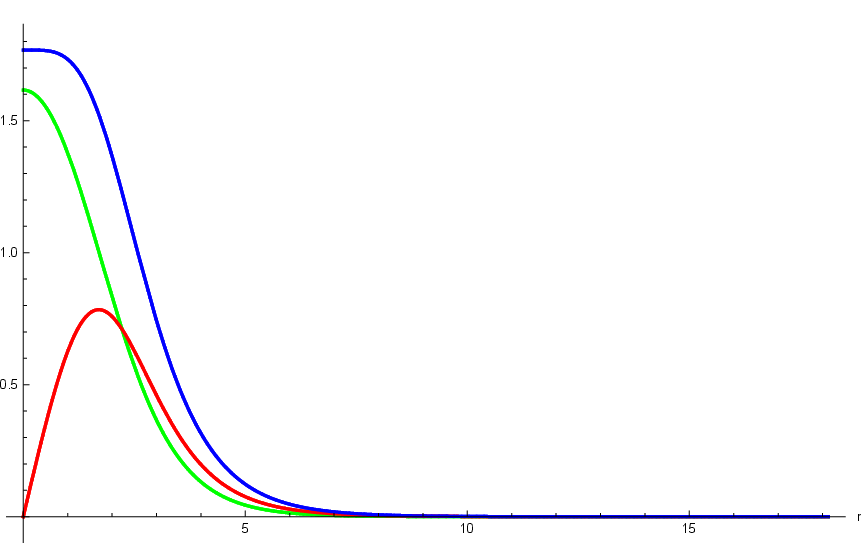}
\hfill\includegraphics[width=0.46\textwidth,height=140pt]{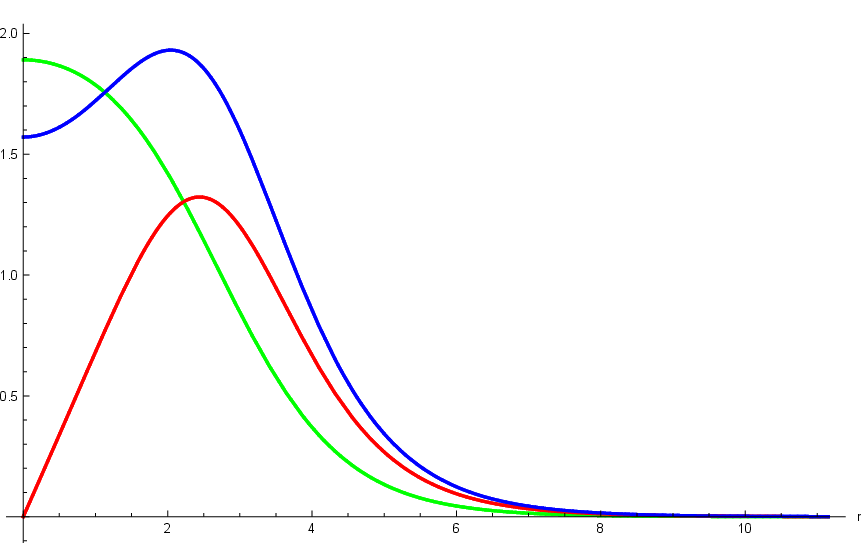}
\fi
 
\ifpdf
\noindent\includegraphics[width=0.46\textwidth,height=140pt]{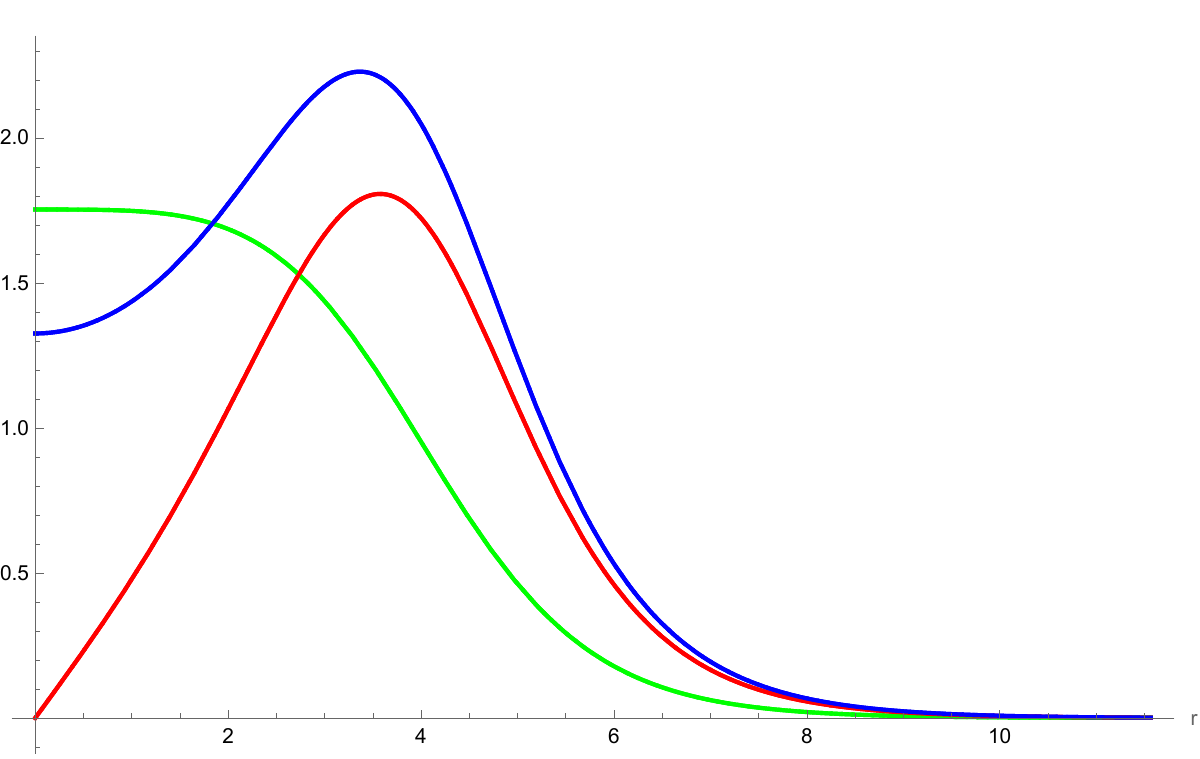}
\hfill\includegraphics[width=0.46\textwidth,height=140pt]{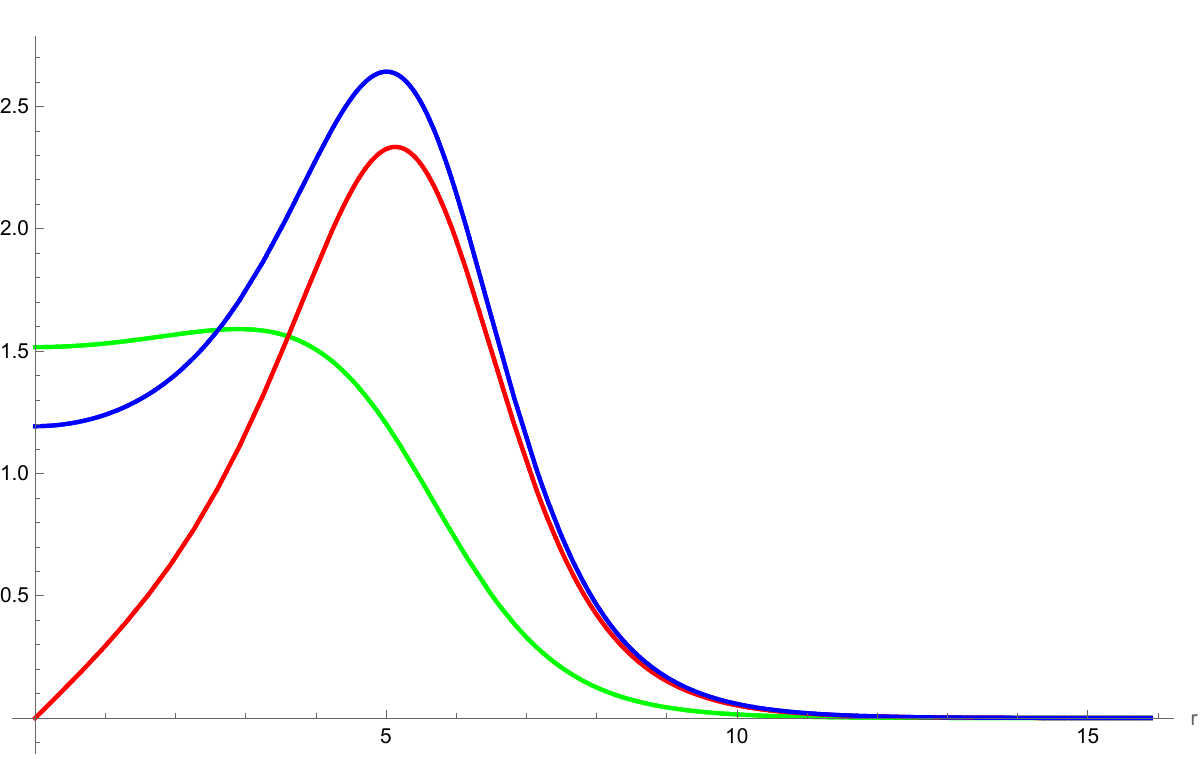}
\else
\noindent\includegraphics[width=0.46\textwidth,height=140pt]{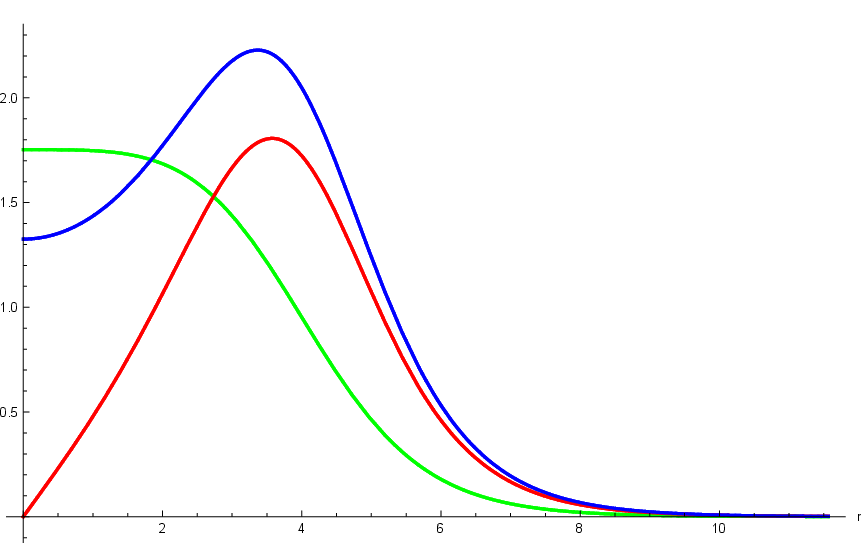}
\hfill\includegraphics[width=0.46\textwidth,height=140pt]{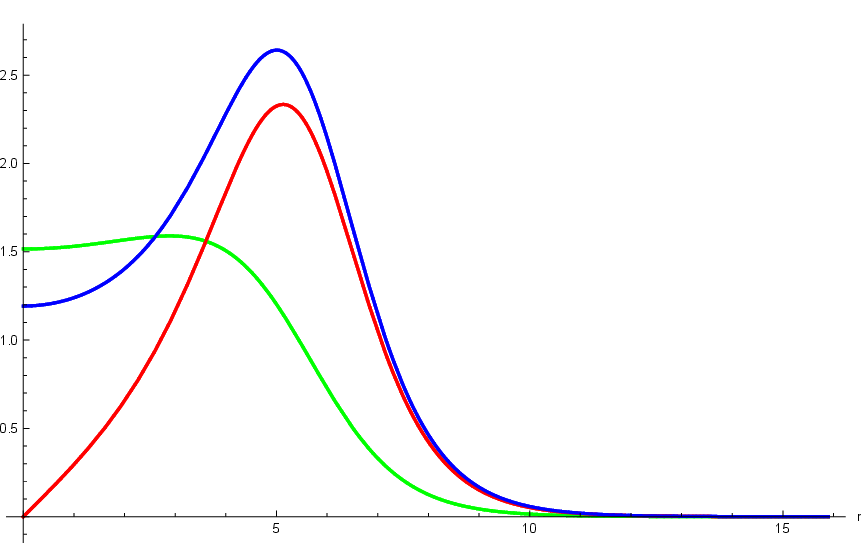}
\fi

\caption{\footnotesize
3D, the nested shooting method;
$m=1$,
$\coupling=1$, $M=1$;
the nested shooting method.
Profiles $v$ (blue), $u$ (red), and $h$ (green) of the solitary waves
corresponding to
$\omega=0.99$ (top left),
$\omega=0.8$ (top right),
$\omega=0.6$ (center left),
$\omega=0.4$ (center right),
$\omega=0.28$ (bottom left),
and
$\omega=0.20$ (bottom right). The functions $v$ and $h$ are monotonic decreasing respectively for $\omega\geq\omega_v\approx 0.6$ and for $\omega\geq\omega_h\approx 0.28$, and are otherwise first increasing and then decreasing.
    }
    \label{Figure-3D-NestedshootingMethod-M=1}
\end{figure}

The energy $E$ as a function of $\omega$
for several values of $M$
is presented on
Figure~\ref{fig-3d}
(the results from the iterative method and
the nested shooting method
are essentially indistinguishable on the plot).
For the corresponding data,
see Table~\ref{table-3d-iterations} as well as Tables~\ref{supp-table-3D-NestedshootingMethod-M=1}--\ref{supp-table-3D-SimpleShootingMethod-M=0} in the Supplementary Material.

The vertical dash-dot line
on the right plot
of Figure~\ref{fig-3d}
corresponds to $\omega_\star=0.936$,
which is
the location of the minimum
of $E$ (and $Q$) of solitary waves
of the cubic NLD.
As $M$ decreases, the minima of $E(\omega)$
move to the right,
approaching $\omega=m$
as $M\to 0$,
while the curve corresponding to $M=0$
goes to zero as $\omega\to m=1$.

\begin{remark}
\label{remark-like-nld}
One can see on Figure~\ref{fig-3d} (right)
that the plot of $E(\omega)$
corresponding to $M=1$ approaches
the NLD's $E(\omega)$
for $\omega$ near $m$.
One can also see this asymptotic behavior
in 1D
(see Figure~\ref{fig-1d} in
Appendix~\ref{sect_appendix_1D}).
This behavior is justified
by the scaling based on
the asymptotic behavior of
profiles of solitary waves
for $\omega\to m$,
when
$v(r)\sim \epsilon V(\epsilon r)$,
$u(r)\sim \epsilon^2 U(\epsilon r)$,
$h(r)\sim\epsilon^2 H(\epsilon r)$,
with $\epsilon=\sqrt{m^2-\omega^2}$,
hence the term $\Delta\phi$
in \eqref{dkg-stationary}
becomes negligible
in the limit $\omega\to m$
(cf. \cite[\S5]{comech2013polarons},
\cite{boussaid2017nonrelativistic}).
\end{remark}

\begin{figure}[!hbt]
\begin{center}
\ifpdf
\includegraphics[width=0.46\textwidth,height=140pt]{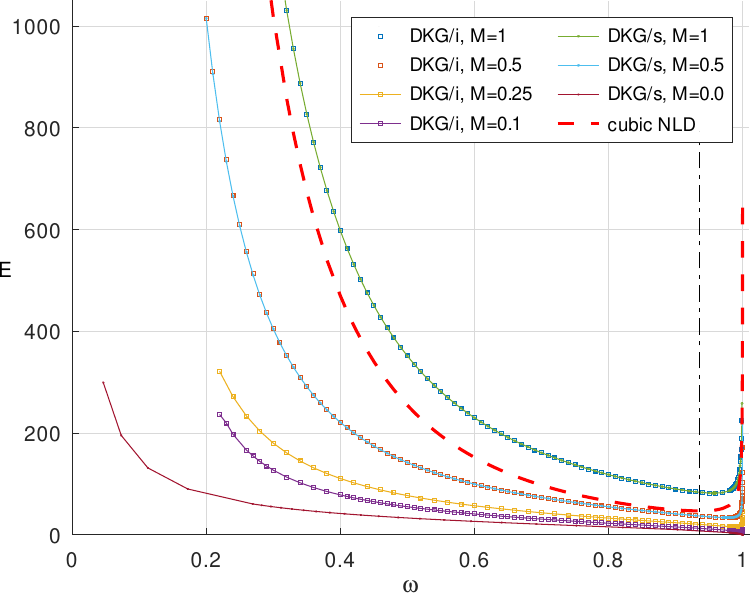}
\hfill\includegraphics[width=0.46\textwidth,height=140pt]{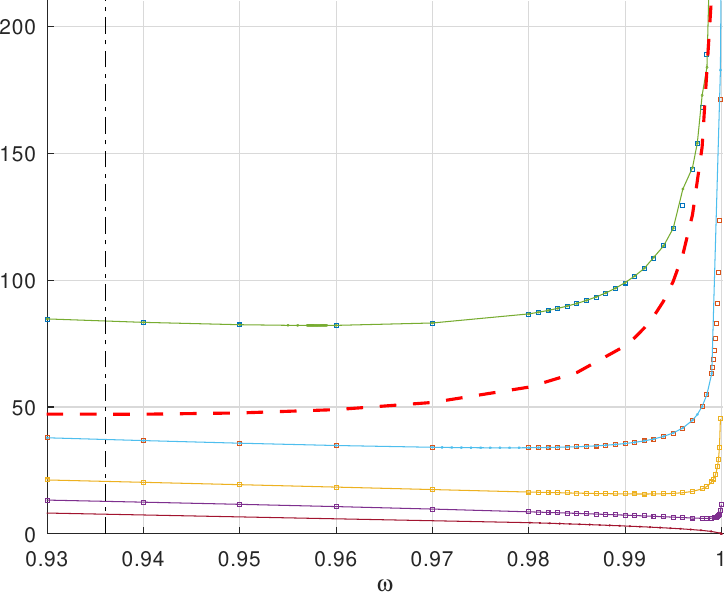}
\else
\includegraphics[width=0.46\textwidth,height=140pt]{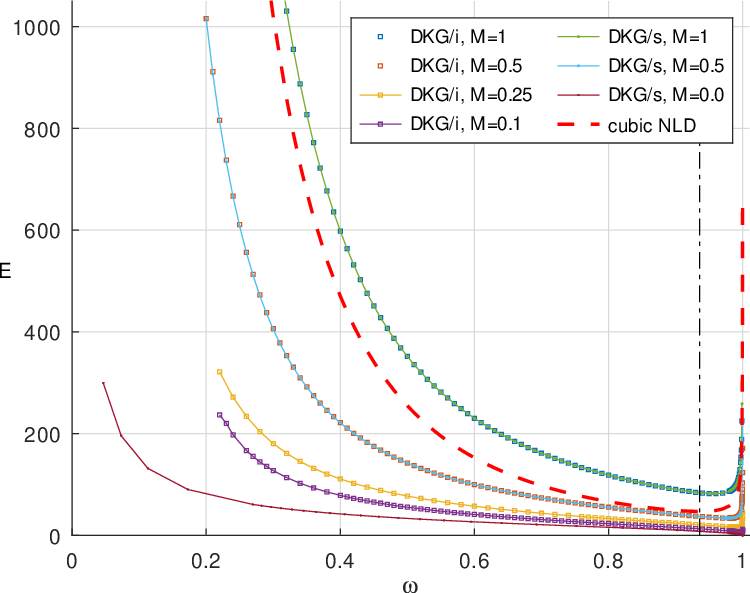}
\hfill\includegraphics[width=0.46\textwidth,height=140pt]{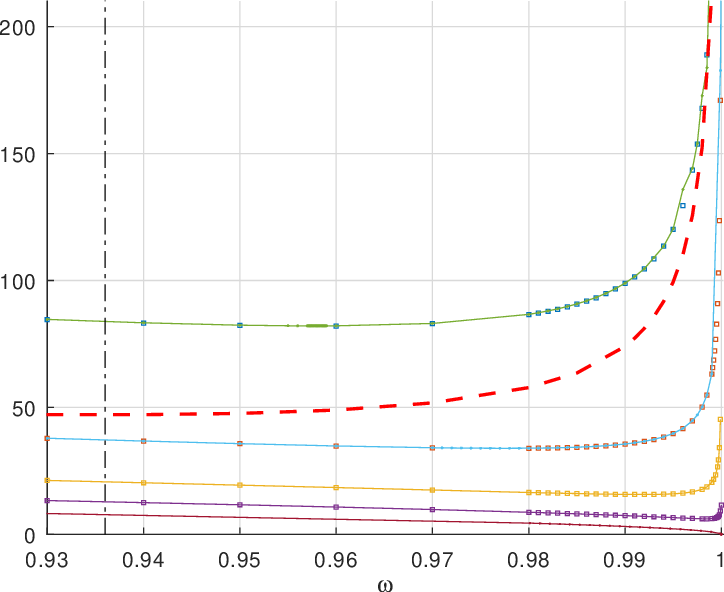}
\fi
\end{center}
\caption{\footnotesize
3D;
$m=1$, $\coupling=1$.
Left:
energy $E$
of solitary waves
of DKG system
as a function of $\omega$.
Right: magnified region
corresponding to $\omega\in(0.93,1)$.
Both iterative
and nested shooting methods for $M= 1$ and $M=0.5$;
the iterative method for $M=0.25$ and $M=0.1$;
the (standard) shooting method for $M=0$.
The squares (labeled ``DKG/i'') correspond to the iterative method and the dots (labeled ``DKG/s'') to the nested shooting method (the shooting method for $M=0$).
The dashed line corresponds to the solitary waves of the cubic NLD.
}
\label{fig-3d}
\end{figure}

\begin{remark}\label{remark-wakano-2}
We expect that both methods should allow to construct
solitary waves with nodes
(covered by the existence proof from \cite{esteban1996stationary}).
Since such solitary waves
are expected to be unstable,
we decided not to overload this
article with their construction.

On another hand,
for $M>0$,
we were not able to find
solitary waves
$\phi(x,\omega)e^{-\jj\omega t}$
with $\phi$
in the form of the second Wakano Ansatz~\cite{wakano-1966},
\begin{equation}\label{sw-2}
    \phi(x,\omega)
    =
    \begin{bmatrix}
        -\jj u(r,\omega)
        \begin{bmatrix}\cos\theta\\e^{\jj\upvarphi}\sin\theta\end{bmatrix}
        \\
        v(r,\omega)\begin{bmatrix}1\\0\end{bmatrix}
    \end{bmatrix},
\end{equation}
with $v,\,u$ satisfying the system
\begin{equation}\label{dkg-stationary-2}
    \begin{cases}
        \omega u=\p_r v+(m-\coupling h)u\,,
        \\
        \omega v=-\p_r u-\frac{n-1}{r}u
        -(m-\coupling h)v\,,
        \\
        \bigl(-\p_r^2-\frac{n-1}{r}\p_r+M^2\bigr)h
        =u^2-v^2\,,
    \end{cases}
    \qquad r>0\,.
\end{equation}
Since one would have $u$ vanishing at $r=0$,
the corresponding value of
$\phi^*\beta\phi=u^2-v^2$
would now be negative in a neighborhood of the origin.
\end{remark}

\subsubsection*{The iterative method}

We treat the 3D case
via one-dimensional FFT
(see Remark~\ref{remark-3d} below).
Here is our iterative
approach to finding solitary waves
in DKG system
numerically:
\begin{enumerate}
\item
Let $\omega\in(0,m)$.
Find the numerical solution
$\phi$
of the form \eqref{sw-1}
to the cubic NLD
(see \eqref{nld-stationary-1}
in Appendix~\ref{sect-nld},
where we take $f(\tau)=\tau$);
denote $v_0=v(0)$.
If we set
$h=v^2-u^2$ and $\coupling=1$,
we get the solution to
\begin{equation}\label{dkg-h}
    (D_0+(m-\coupling h)
    \beta-\omega)\phi=0\,.
\end{equation}

\item
Compute
$h=(-\Delta+M^2)^{-1}(v^2-u^2)$, via FFT, from
the solutions $v$ and $u$ just obtained.

\item
Using the computed field $h$,
adjust the coupling constant
$\coupling$
so that
$D_0+(m-\coupling h)\beta$
has an eigenvalue
$\omega$ corresponding to the ground state
of the form of the first Wakano Ansatz
\eqref{sw-1};
that is,
so that equation \eqref{dkg-h}
has a solution
of the form \eqref{sw-1}
with $v,\,u$ such that
\[
(v,u)\at{r=0}=(v_0,0)\,,
\qquad v\at{\R_{+}}>0\,,
\qquad \text{and} \qquad
\lim_{r\to+\infty}v(r)=0\,.
\]

\item
Repeat from Step 2 until
$\coupling$
and $h(0)$
stabilize,
and hence so do $\phi$ and $h$,
providing a solution to
\begin{equation}\label{alpha-general}
    \omega\phi
    =D_0\phi
    +(m-\coupling h)\beta\phi\,,
    \qquad
    (-\Delta+M^2)h 
    = \phi^*\beta\phi\,.
\end{equation}

\item
From
the computed solutions
$\phi$ and $h$
which satisfy \eqref{alpha-general}
get the solutions
to the system
\eqref{system-dkg-r}
with $\coupling=1$,
\begin{equation}\label{alpha-one}
    \omega\tilde\phi
    =D_0\tilde\phi
    +(m-\tilde h)\beta
    \tilde\phi\,,
    \qquad
    (-\Delta+M^2)\tilde h
    =\tilde\phi^*\beta\tilde\phi\,,
\end{equation}
by setting
\begin{equation}\label{alpha-change}
    \tilde\phi
    =\coupling^{1/2}\phi
    \qquad \text{ and } \qquad
    \tilde h=\coupling h\,.
\end{equation}
The charge, energy,
and other quantities
\eqref{def-kn}, \eqref{def-tvw}
corresponding to $\tilde\phi$ and
$\tilde h$
(which we denote
$\tilde Q$, $\tilde E$,
and so on)
are obtained from the charge, energy,
and other quantities
corresponding to $\phi,\,h$ by
\begin{equation}\label{alpha-q-e-change}
    \tilde Q=\coupling Q\,,
    \qquad
\tilde E=\coupling E\,,
\end{equation}
and also
$\tilde K=\coupling K$,
$\ \tilde N=\coupling N$,
$\ \tilde T=\coupling T$,
$\ \tilde V=\coupling V$,
\ and
$\ \tilde W=\coupling W$.
\end{enumerate}

\begin{remark}
\label{remark-3d}
The fast Fourier transform
works equally well
for finding DKG
solitary waves
in 1D and in 3D.
Let the density
$\sigma(x)=\phi(x)^*\beta\phi(x)$
be spherically symmetric
(so we consider it as a function of $r=\abs{x}$).
Then its Fourier transform
and the inverse Fourier
transform
are given by
\begin{align*}
\hat\sigma(\xi)
&=
\int_{\R^3}
e^{\jj\xi x}\sigma(x)\,dx
=2\pi\int_{\R_{+}}
\int_{-1}^1
e^{\jj\abs{\xi} r w}
\sigma(r)\,r^2\,dr\,dw
=
\frac{2\pi}{\jj\abs{\xi}}
\Big[
\int_{\R}
e^{\jj\abs{\xi} y}
\sigma(\abs{y})\,y\,dy
\Big],
\\
\sigma(x)
&=
\frac{2\pi}{(2\pi)^3}
\int_{\R_{+}}
\int_{-1}^1
e^{-\jj\lambda \abs{x} w}
\hat\sigma(\lambda)
\,\lambda^2\,d\lambda
\,dw
=\frac{\jj}{(2\pi)^2\abs{x}}
\int_{\R}
e^{-\jj k\abs{x}}
k\hat\sigma(\abs{k})
\,dk\,.
\end{align*}
Above,
$w=\cos\theta$,
with $\theta$ the angle between
$x$ (or $y$) and $\xi$;
for $\lambda>0$,
$\hat\sigma(\lambda)$
denotes
the value of $\hat\sigma$
on any $\xi\in\R^3$ with $\abs{\xi}=\lambda$.
Thus, both the direct and inverse
Fourier transform reduce to the
one dimensional case
and the fast Fourier transform
can be applied
in 3D
with no impact on the
computation speed.
\end{remark}

The solutions were constructed in
Octave with the {\tt ode45} ODE solver.
We used the length
$L=10/\sqrt{m^2-\omega^2}$,
to take into account weaker decay of solitary waves
with $\omega$ near $m=1$.

\begin{remark}
Since $h\sim e^{-M\abs{x}}/\abs{x}$,
this value of $L$ is
insufficient for small
values of $M$
and would contribute
to larger
relative errors when evaluating
$T(h)$ and $W(h)$
(see \eqref{def-tvw}).
\end{remark}

The step was set at $\varDelta x=0.01$;
absolute and relative tolerances were set to $10^{-8}$.

The energy
as a function of $\omega$
are represented
on Figure~\ref{fig-3d}
(three spatial dimensions)
and on Figure~\ref{fig-1d}
(one spatial dimension)
for several values of $M$.  

The solitary wave profiles for two values of $\omega$
are plotted on
Figure~\ref{fig-1-1.00}
(1D, $m=M=1$)
and
Figure~\ref{fig-3-1.00}
(3D, $m=M=1$).
The corresponding initial data and values of energy and charge
(recomputed for $\coupling=1$)
are represented in Tables~\ref{table-1d-iterations}
and~\ref{table-3d-iterations}.
The energy
is computed by
$E=\omega Q(\phi)-V(\phi,h)/2$
(see \eqref{dkg-energy-no-derivatives}).
For the estimate of the error,
we use the quantity
\begin{align}
\label{def-varepsilon}
\varepsilon=
\omega Q-\frac{n-1}{n}K-N-\frac{n+2}{2n}V
-\frac{2}{n}W\,,
\end{align}
which is supposed to be zero
in view of the virial identity
\eqref{for-epsilon}.
Step zero
corresponds to the solitary wave
of the nonlinear Dirac equation
which we use for the first iteration; the error there
corresponds to the error
\eqref{def-varepsilon-nld}
in the virial identity for the
nonlinear Dirac equation
(see Appendix~\ref{sect-nld}).
We note that
for the iterative method
in the 3D case with $M=1$, $\omega=0.5$
(see Table~\ref{table-3d-iterations}),
the relative error in the virial identity,
$\varepsilon/(\omega Q)$, is approximately $0.05\%$;
for $\omega=0.9$,
$\varepsilon/(\omega Q)$ is approximately $0.01\%$.

\subsubsection*{The nested shooting method}

We are to solve~\eqref{system-dkg-r} with initial values $v(0)=v_0$,
$u(0)=0$,
$h(0)=h_0$, $h'(0)=0$
and therefore need to determine $v_0$ and $h_0$ such that the solution $(v, u, h)$ for these two initial values satisfies $v, u, h \to 0$ at infinity.
We will do this adjusting
$v_0$ and $h_0$ via the bisection method
(also known as the dichotomy method or the interval halving method).
While we cannot
adjust two parameters by the shooting method at the same time,
we nest one shooting inside the other: the outer shooting is performed on $h_0$ and for the each tested value of $h_0$ the inner shooting adjusts $v_0$.
Even though we do not see
an \emph{a priori}
reason for such an algorithm to systematically converge, we have a set of rules
for which the convergence takes place.
We make the decision on adjusting
$v_0$ based on certain criteria on $h$ and, conversely, the adjustment of $h_0$
is based on the criteria on $v$.

Here is our approach.
We fix the parameters $(m,M,\coupling,\omega)$.
We perform the following (outer)
shooting to find $h_0$
using the bisection method,
starting with two values $h_0^-$ and $h_0^+$,
which we set anywhere to $(m-\omega)/\coupling$ and $2$, respectively
(see Remark~\ref{remark-small-omega}
below).

\begin{enumerate}[label=\arabic*)]
    \item\label{DoubleShootingOuterDichoItem1}
Set $h_0=(h_0^- + h_0^+)/2$ and for this fixed value of $h_0$ perform the following (inner) shooting on $v_0$
between two values $v_0^-$ and $v_0^+$,
which we initially set at
$0$ and $2$, respectively
(see also Remark~\ref{remark-small-omega}
below):
\begin{enumerate}[label=\roman*)]
\item\label{DoubleShootingInnerDichoItem1}
Set $v_0=(v_0^- + v_0^+)/2$
and
solve~\eqref{system-dkg-r} numerically with initial values $(v_0, h_0)$
in order to obtain a solution $(v, u, h)$ to the system on the interval $[0,L]$. We used the Mathematica's solver \texttt{NDSolve} for this.
Here, $L$ is the smallest range for which both $s_0$ and $s_1$ (see next step) are not $+\infty$ or,
when \texttt{NDSolve} fails to reach a large enough range so that $s_0, s_1<+\infty$, it is the maximum range on which the solver produces the solution. In particular, and contrarily to the iterative method, $L$ is not a parameter fixed \emph{a priori}.
            
\item\label{DoubleShootingInnerDichoItem2}
Set
\begin{align*}
\hskip -10mm
s_0:= \inf\bigl\{r \in (0,L) : h(r)=0\bigr\}
\quad
\mbox{and}
\quad
s_1:= \inf\bigl\{r \in (0,L) : h'(r)=0\,, \ h''(r)\geq0\bigr\};
\end{align*}
if $s_0<s_1$, set $v_0^+=v_0$;
otherwise, set $v_0^-=v_0$ (see Remark~\ref{remark-dicho-conditions}
below).
\item\label{DoubleShootingInnerDichoItem3}
Go back to Step~\ref{DoubleShootingOuterDichoItem1}-\ref{DoubleShootingInnerDichoItem1}
and repeat until  $(v_0^--v_0^+)/v_0$ reaches the desired relative precision.
\end{enumerate}
\item\label{DoubleShootingOuterDichoItem2}
Set
\begin{align*}
    t_0 := \inf\bigl\{r \in (0,L) : v(r)=0\bigr\}
\quad
\mbox{and}
\quad
    t_1 := \inf\bigl\{r \in (0,L) : v'(r)=0\,, \ v(r) < v(0)\bigr\}\,;
\end{align*}
if $t_0<t_1$, set $h_0^+=h_0$;
otherwise, set $h_0^-=v_0$ (see Remark~\ref{remark-dicho-conditions}
below).
\item\label{DoubleShootingOuterDichoItem3} Go back to Step~\ref{DoubleShootingOuterDichoItem1} and repeat until $(h_0^--h_0^+)/h_0$ reaches the desired relative precision.
\end{enumerate}

For each $\omega$,
we compute the charge $Q$
defined by~\eqref{def-Q};
$E$ is computed via~\eqref{dkg-virial-noderivatives-with-E}.
In order to control the accuracy of our numerics, we compute the relative error
in the virial identity;
we rewrite \eqref{def-varepsilon}
in the form
that does not contain
$K(\phi)$ -- cf. \eqref{dkg-virial-noderivatives} -- as
$\varepsilon = \omega Q - N - \frac{4-n}{2} V - 2 W$,
and now the relative error in 3D reads as follows:
\begin{equation}\label{relative_error_shooting_3D}
\frac{|\varepsilon|}{\omega Q} =
\left| 1 -  \frac{N + V/2 + 2 W}{\omega Q} \right|.
\end{equation}
For the relative precision on $v_0$ and $h_0$
tested respectively in Steps~\ref{DoubleShootingOuterDichoItem1}-\ref{DoubleShootingInnerDichoItem3}
and~\ref{DoubleShootingOuterDichoItem3},
we take
the threshold of $10^{-10}$
for $(m, M, \coupling) = (1,1,1)$ and $10^{-15}$ for $(m, M, \coupling) = (1,1/2,1)$ and $(m, M, \coupling) = (1,1/4,1)$;
we take a smaller value only when we do not reach a small enough relative error in the virial identity with our default choice. This occurs for instance for the values of $\omega$
near $m$
and also for (some of) the values of
$\omega$ that are close to
$\omega_v\approx 0.6$
and $\omega_h\approx 0.28$
for which $v''(0)$ and $h''(0)$,
respectively, are equal to zero
(see Fig.~\ref{Figure-3D-NestedshootingMethod-M=1}, center left and bottom left,
respectively).
For example,
for $(m, M, \coupling) = (1,1,1)$, we had
to increase precision at $\omega = 0.6$ and $\omega=0.61$
and at $\omega=0.26$.

\begin{remark}\label{Rmk_nested_shooting_heuristic}
    The heuristic argument behind the decision ``if $h$ vanishes, then $v_0$ needs to be lowered'' is the following.
    From the third equation in~\eqref{system-dkg-r}, we have
    \[
        n h''(0) = - v_0^2 + M^2 h_0\,.
    \]
    Thus, for a fixed $h_0$, by making smaller $v_0>0$, $h''(0)$ is made larger. The heuristic is then, still for a fixed $h_0$, that if $h$ vanishes, then making $h''(0)$ larger will have the new $h$ vanishing later or not at all.
\end{remark}

\begin{remark}
\label{remark-small-omega}
For our application of this method, we made the
\emph{a priori} assumption on the solitary wave that $u'(0)>0$;
by the first equation
from~\eqref{system-dkg-r},
this is equivalent to
$h(0) > \frac{m-\omega}{\coupling}$. Nevertheless, $h_0^-$ can be taken anywhere in $\left(0, (m-\omega)/\coupling \right]$ without impact on the convergence of the algorithm to the solitary wave.

For smaller values of $\omega$
(e.g., for $\omega\le 0.24$
in the case $m=M=\coupling=1$)
we have to fine-tune our choice of initial values $v_0^\pm$ and $h_0^\pm$ -- based on $(v_\omega(0), h_\omega(0))$ found for larger values of $\omega$
-- in order for the method to converge to a solitary wave.
\end{remark}

\begin{remark}
\label{remark-dicho-conditions}
The set of rules is basically ``if $h$ vanishes, then $v_0$ needs to be lowered'' and ``if $v$ vanishes, then $h_0$ needs to be lowered'', with the implemented conditions involving the zeros of $h'$ and $v'$ being technicalities to address the fact that the numerical solutions are of course not computed on the whole half-line.

The condition on $h$ in Step~\ref{DoubleShootingOuterDichoItem1}-\ref{DoubleShootingInnerDichoItem2} is to set $v_0^+=v_0$ if $h$ vanishes (strictly) before $h'$ on $(0,+\infty)$, and $v_0^-=v_0$ otherwise, with the subtlety that we exclude the zeros of $h'$ at which $h''$ is negative. This is necessary for finding solitary waves for which $h$ is increasing on some interval $(0,R)$ and then decreasing on $(R, +\infty)$, which is the case for smaller
values of $\omega$ (see Fig.~\ref{Figure-3D-NestedshootingMethod-M=1}).

There is a similar situation for the condition on $v$ in Step~\ref{DoubleShootingOuterDichoItem2}, allowing us to consider
functions $v$ which are
increasing on $(0,R)$ for some $R\geq0$
and decreasing afterwards.
\end{remark}

\begin{remark}
There is an alternative nested
shooting approach
 that also converges.
 It is similar to the one
 we described above
but with an outer shooting on $v_0$ where the criterion to redefine $v_0^\pm$ is based on zeros of $v$ and $v'$, and with an inner shooting on $h_0$ where the criterion to redefine $h_0^\pm$ is
based on the zeros of $h$ and $h'$. While this approach might seem more natural because the criteria to shoot on the initial value of a function is based on the zeros of the function itself and its derivative, it does not yield better results in our context. Indeed, when applying it on decreasing the values of $\omega$, the first value for which it stops to converge (without fine-tuning) is larger than for the approach described previously. Moreover, for the values of $\omega$ for which it converges, it gives similar results in terms of accuracy and speed. We therefore decided not to overload this
    article with this alternative approach.
\end{remark}

\section{The case of massless scalar field in 3D}
\label{sect-dkg-numerics-massless}

In the case of massless scalar field,
we can also compute the solitary waves
with high accuracy
via the (standard) shooting method.
Unlike in the $M>0$ case
(where $h$ decays to zero exponentially),
if
$(\phi,h)$
is a solution to \eqref{system-dkg-r} with $M=0$,
then so is
$(\phi,h+c)$ for any $c\in\R$;
now \eqref{dkg-stationary} takes the form
\begin{align}\label{dkg-stationary-0}
\begin{cases}
\omega\phi=D_0+(m-\coupling h)\beta\phi,
\\
-\Delta h=\phi^*\beta\phi,
\end{cases}
\end{align}
and one needs to set some condition
to select the solution $h$
to the above system.
Given that the solution $h$
to \eqref{dkg-stationary} with localized
$\phi$ satisfies $\lim_{r\to+\infty}h(r)=0$,
we will use the same condition
for \eqref{dkg-stationary-0}.
At the same time, we can introduce a new
function
\begin{align}\label{def-H}
H(r) := h(r) - \frac{m}{\coupling},
\end{align}
and solve the system
\begin{align}\label{dkg-stationary-s}
\begin{cases}
\omega\phi=D_0-\coupling H\beta\phi,
\\
-\Delta H=\phi^*\beta\phi,
\end{cases}
\end{align}
specifying $H(0)=s$,
and then determine the ``effective'' mass
by
(cf. \eqref{def-H})
\begin{align}\label{def-ms}
m:=-\coupling\lim_{r\to+\infty}H(r)
\end{align}
and restore from \eqref{def-H}
the scalar field
\begin{align}\label{def-hs}
h(r)=H(r)+\frac{m}{\coupling},
\end{align}
so that indeed
$h(r)\to 0$ as $r\to+\infty$.

To summarize, in the case
$M=0$ we can reduce
the shooting of the solution to~\eqref{dkg-stationary-0} with respect to two parameters,
$v(0)$ and $h(0)$, to the shooting of the solution to~\eqref{dkg-stationary-s} with respect to only one parameter,
$v(0)$,
after fixing some $H(0)=s$.
We then
obtain the corresponding
``effective value'' of $m$
and the scalar field $h(r)$
\emph{a posteriori}
from~\eqref{def-ms}
and~\eqref{def-hs}.
The solution can then be scaled
from the ``effective value'' of mass
to any desired value.

\subsubsection*{The (standard) shooting method}

Thus, the procedure for the shooting method
is as follows.
We consider the system
\eqref{dkg-stationary-s},
which we write in the form
(cf. \eqref{system-dkg-r})
\begin{equation}\label{system-dkg-r-2}
\begin{cases}
  \p_r u
  =
  -\frac{n-1}{r}u+(\omega_0+\coupling H)v\,,
  \\
  \p_r v=
  -(\omega_0-\coupling H)u\,,
  \\
  -\left(
  \p_r^2+\frac{n-1}{r}\p_r
  \right)H=v^2-u^2\,,
\end{cases}
\qquad
r>0\,,
\end{equation}
with some fixed $\omega_0>0$;
for definiteness, we set
\begin{align}\label{omega-is-1}
\omega_0=1\,.
\end{align}
We fix some value of the parameter $s\in\R$
and solve the system
\eqref{system-dkg-r-2}
with the initial data
\[
u(0)=0,
\qquad
H(0)=s,
\qquad
H'(0)=0,
\]
using the shooting method
to adjust the value of $v(0)>0$
so that $v(r)\to 0$
as $r\to+\infty$
(this practically
implies that $u(r)\to 0$, too),
and denoting the solution that
we found by
$\big(u_s(r),\,v_s(r),\,H_s(r)\big)$,
$r\ge 0$.
We find such
decaying solutions for
$s>-\omega_0/\coupling=-1$;
cf. Table~\ref{supp-table-3D-SimpleShootingMethod-M=0}.)
The corresponding value $m_s$
and the scalar field $h_s$
corresponding to this particular value of
$s\in\R$
are restored via \eqref{def-ms}
and \eqref{def-hs}:
\[
m_s
:=-\coupling\lim_{r\to+\infty}H_s(r),
\qquad
h_s(r)
:=H_s(r)+\frac{m_s}{\coupling}.
\]

\begin{remark}
\label{remark-41}
For the field $h$ and
the value
of the effective mass
defined in \eqref{def-hs}
and \eqref{def-ms},
one needs to have the value of 
$H_s(+\infty)$.
Since the decay of
\[
h_s(x,\omega)=(4\pi\abs{x})^{-1}\ast(v_s^2-u_s^2)
\sim \abs{x}^{-1}
\]
is relatively slow
(we assume that $v_s$ and $u_s$
are exponentially localized),
when our shooting method returns us
the function on some interval
$[0,L]$,
we use the approximation
\begin{align*}
H_s(+\infty)
&=\lim\sb{r\to+\infty}H_s(r)
=\lim\sb{r\to+\infty}
(H_s(L)-h_s(L)+h_s(r))
\\
&=H_s(L)-h_s(L)
    \approx
    H(L)-\frac{1}{4\pi L}
    \int_{\abs{x}\le L}
    \phi_s^*\beta\phi_s\,,
\end{align*}
where $L$
is large enough so that
the integrand
in the last term
is essentially supported
in the ball
of radius $L$
in $\R^3$, and where $\phi_s:=(v_s, u_s)$.
\end{remark}

\begin{remark}
In all the numerics we have only
seen the values $m_s>0$;
for the simplicity of exposition, 
from now on we assume that
$m_s>0$.
\end{remark}

Since
$\phi_s=(v_s, u_s)$ and $h_s$
satisfy
\[
\omega_0\phi_s
=D_0\phi_s-\coupling H_s\beta\phi_s
=D_0\phi_s+(m_s-\coupling h_s)\beta\phi_s\,,
\quad
-\Delta H_s=-\Delta h_s=\phi_s^*\beta\phi_s\,,
\quad
\lim_{r\to+\infty}h_s(r)=0\,,
\]
the couple
$(\hat\phi(x),\ \hat h(x))
=\bigl(k\phi_s(k x),\ h_s(k x)\bigr)$,
with $k>0$,
satisfies
\[
k\omega_0 \hat\phi
= D_0\hat\phi
+(km_s - k\coupling \hat{h}) \beta\hat\phi\,,
\qquad
-\Delta \hat{h}=\hat\phi^*\beta\hat\phi,
\qquad
\lim_{r\to+\infty}\hat h(r)=0\,,
\]
and, by virtue of~\eqref{alpha-general}--\eqref{alpha-change},
the couple
$(\phi(x),\  h(x)\bigr)
= \left(k^{3/2}\phi_s(k x),\ k h_s(k x)\right)$
satisfies
\[
k\omega_0\phi = D_0\phi + \left( k m_s - \coupling h \right) \beta\phi\,,
\qquad
-\Delta h = \phi^*\beta\phi\,,
\qquad
\lim_{r\to+\infty}h(r)=0\,.
\]
For a given $m>0$,
we set $k = m/m_s>0$, so that the triple
\begin{equation}\label{rg}
\bigl(\phi(x),\ h(x),\ \omega\bigr)
=
\left(
\left(\frac{m}{m_s}\right)^{\frac{3}{2}} \phi_s\left(\frac{m}{m_s} x \right),
\ \frac{m}{m_s} h_s\left(\frac{m}{m_s} x \right),
\ \frac{m}{m_s}\omega_0
\right)
\end{equation}
solves
\[
\omega \phi = D_0\phi
+ (m-\coupling h)\beta\phi\,,
\qquad
-\Delta h = \phi^*\beta\phi\,,
\qquad
\lim_{r\to+\infty}h(r)=0\,.
\]
The quantities
\eqref{def-Q},
\eqref{def-kn}, and
\eqref{def-tvw}
for the triples
$(\phi_s,h_s,m_s)$
and
$(\phi,h,m)$
are related by
\[
Q = Q_s\,,
\qquad
K = \frac{m}{m_s} K_s\,,
\qquad
N = \frac{m}{m_s} N_s\,,
\qquad
V = \frac{m}{m_s} V_s\,,
\quad\text{ and } \quad
T = \frac{m}{m_s} T_s
\]
(while $W_s = 0 = W$ due to $M=0$),
hence
by~\eqref{dkg-virial-noderivatives-with-E}
the corresponding
energies and the errors in the
virial identity
are related by
\[
    E = \frac{m}{m_s} E_s
    \quad \text{ and } \quad
    \varepsilon
    =
    \omega Q
    - N - \frac{1}{2} V
    =\frac{m}{m_s}\varepsilon_s\,.
\]
We apply the transformation \eqref{rg}
to rescale all our solitary waves
to the same value $m$,
which finally gives us the corresponding value of $\omega$:
\begin{equation}\label{def-new-omega}
\omega=\frac{m}{m_s}\omega_0\,.
\end{equation}

The above method allows us to construct
solitary waves of the form \eqref{sw-1}
to the system~\eqref{dkg-stationary} with 
$m=1$,
$\coupling=1$, and $M=0$ for the whole range $\omega \in (0,m)$.
(Let us mention that in \eqref{omega-is-1}
we were free to choose $\omega_0=1$,
and then \eqref{def-new-omega}
takes the form
$\omega=1/m_s$.)
The data obtained by the shooting method on $v(0)$
for different values of $H(0)=s$
is collected in
Table~\ref{supp-table-3D-SimpleShootingMethod-M=0}
in the Supplementary Material.
For the estimate of the accuracy,
we use the relative error
in the virial identity
in the form~\eqref{relative_error_shooting_3D},
which in the case
$n=3$
and
$M=0$ -- hence $W(h)=0$, cf. \eqref{def-tvw} --
reduces to
\[
\frac{\abs{\varepsilon}}{\omega Q}
=
\bigg\vert
1 -  \frac{N + V/2}{\omega Q}
\bigg\vert.
\]
The corresponding values of the energy
are superimposed on Figure~\ref{fig-3d}
over the values of $E(\omega)$
corresponding to massive scalar field
(for several different positive values of $M$).
The profiles of the solitary waves 
for several values of $\omega$
in the case of the massless spinor
field
are displayed in Figure~\ref{Figure-profiles-3D-SimpleShootingMethod-M=0}.

\begin{remark}
Above, to construct different solitary
waves,
we were fixing $\omega_0=1$
and, for several initial values $H(0)=s$ for the scalar field,
shooting with respect to $v(0)$.
Let us mention that
fixing instead the initial value of the scalar field to $H(0)=0$
and, for several values of $\omega_0$,
shooting with respect to $v(0)$
leads to one and the same
value of $\omega$
in \eqref{def-new-omega},
and thus is not suited for constructing
solitary waves for different $\omega\in(0,m)$.
Indeed,
if the triple
$(\phi(x),\,H(x),\,\omega_0)$
satisfies the system
\[
\omega_0\phi
=
D_0\phi-\coupling H\beta\phi,
\qquad
-\Delta H=\phi^*\beta\phi,
\]
then so does the triple
\[
(\tilde\phi(x),~\tilde H(x),~\tilde\omega)=
\bigl(
k^{3/2}\phi(kx),~k H(kx),~k\omega_0
\bigr),
\qquad
k>0\,;
\]
since the effective masses,
$m=-\coupling\lim_{r\to+\infty}H(r)$
and
$\tilde{m}
=-\coupling\lim_{r\to+\infty}\tilde H(r)$,
are related by
$\tilde{m}=km$
(since so are $\tilde H$ and $H$),
one concludes that
$\omega = \omega_0/m$
in~\eqref{def-new-omega}
is the same as
$\omega=\tilde\omega/\tilde{m}$.
\end{remark}

\begin{figure}[!ht]
\ifpdf
\noindent\includegraphics[width=0.46\textwidth,height=140pt]{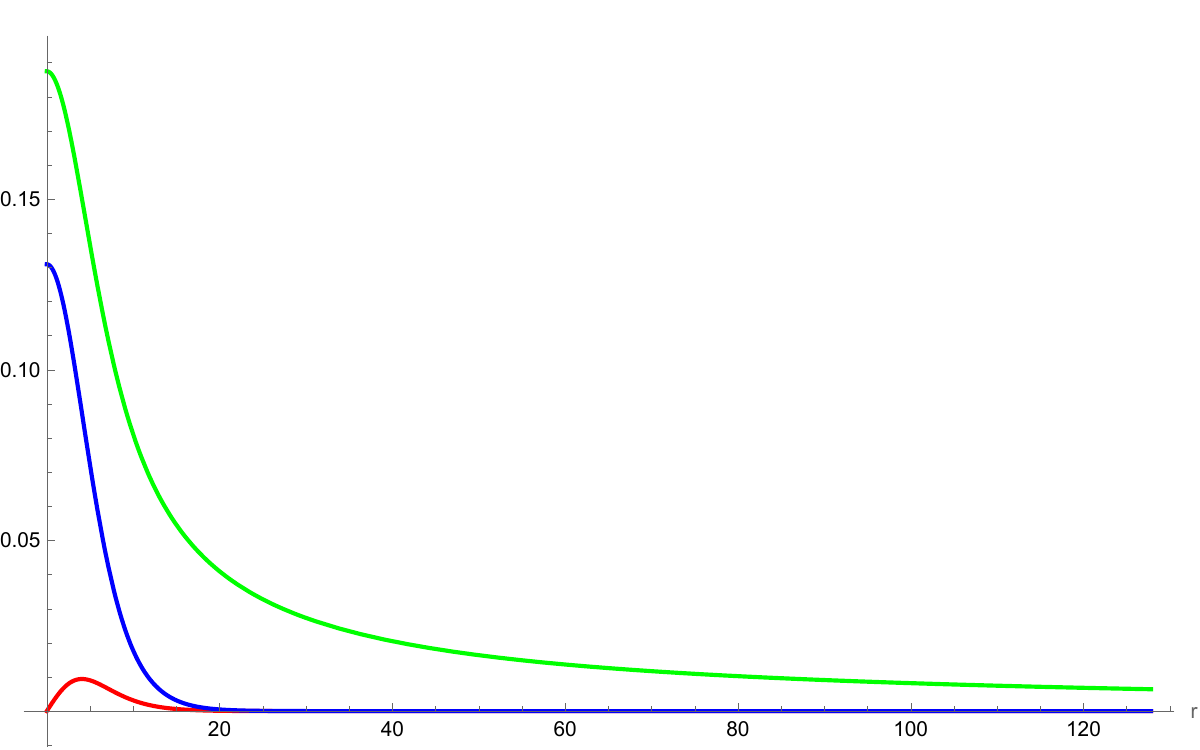}
\hfill\includegraphics[width=0.46\textwidth,height=140pt]{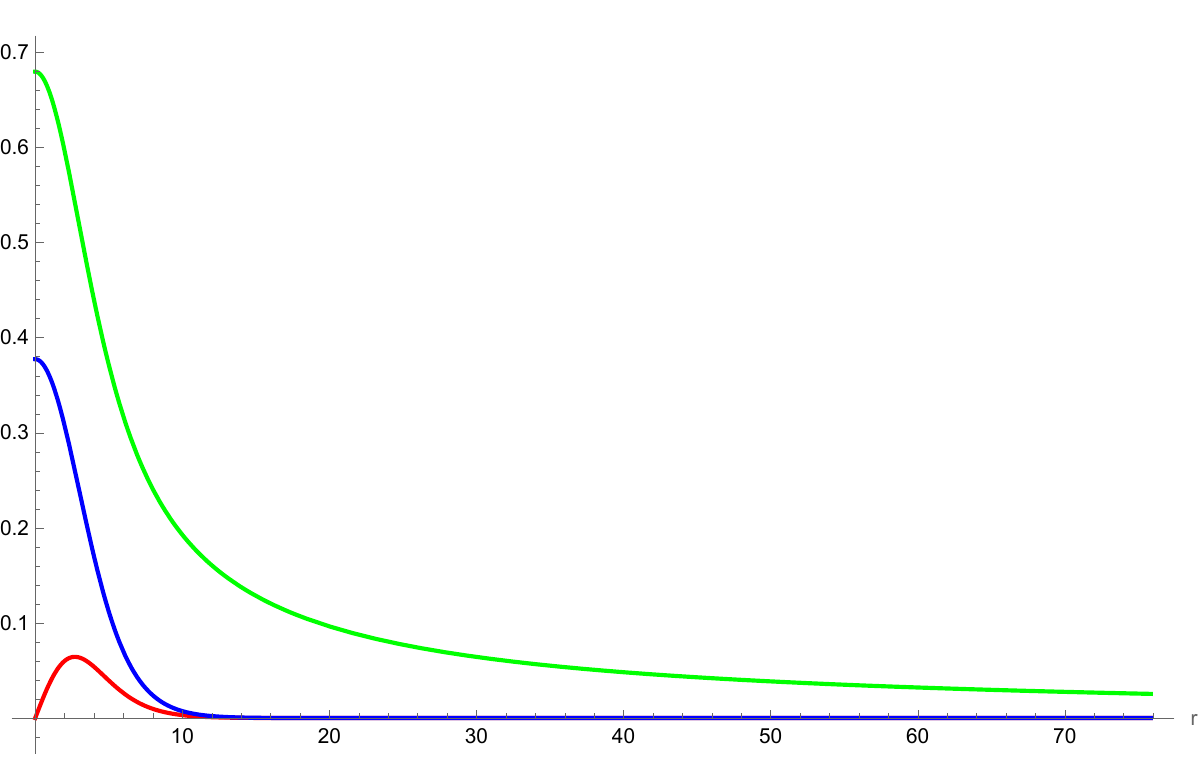}
\else
\noindent\includegraphics[width=0.46\textwidth,height=140pt]{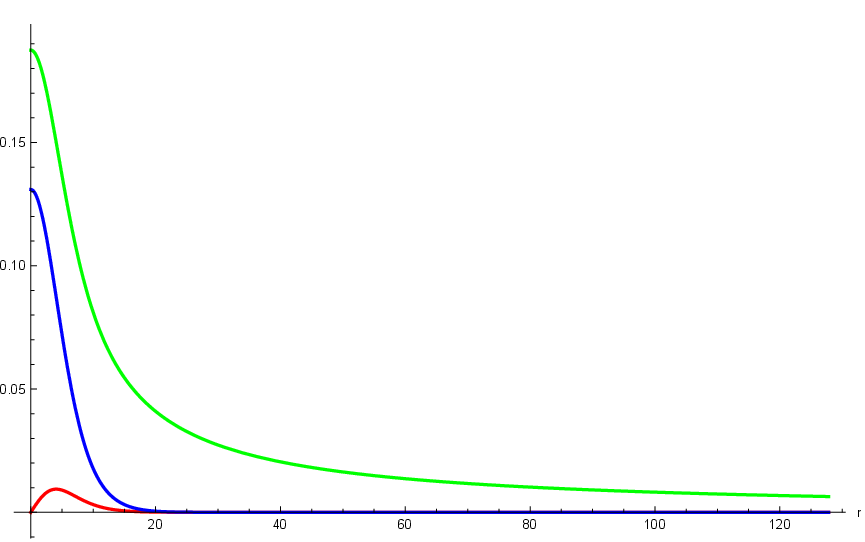}
\hfill\includegraphics[width=0.46\textwidth,height=140pt]{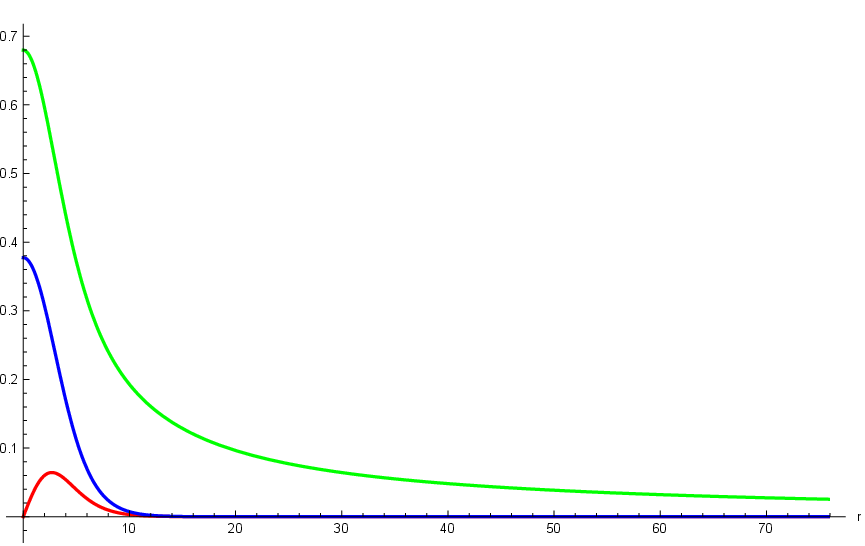}
\fi
    
\ifpdf
\noindent\includegraphics[width=0.46\textwidth,height=140pt]{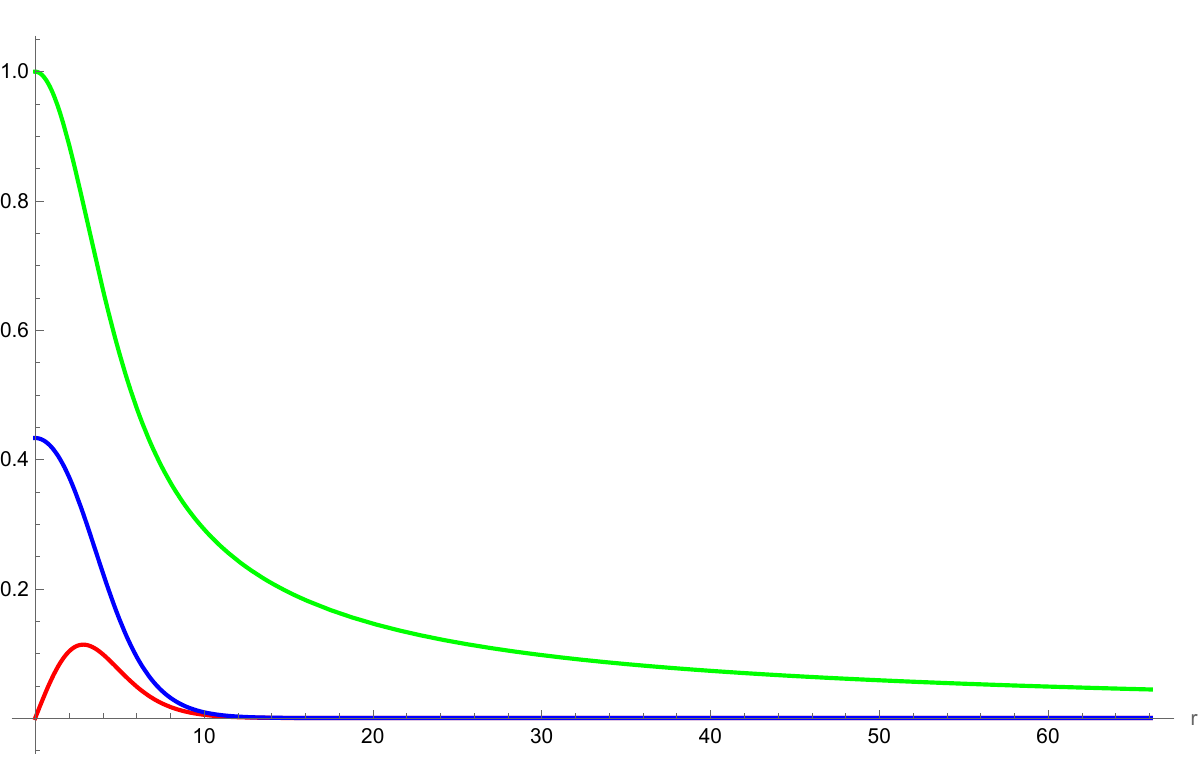}
\hfill\includegraphics[width=0.46\textwidth,height=140pt]{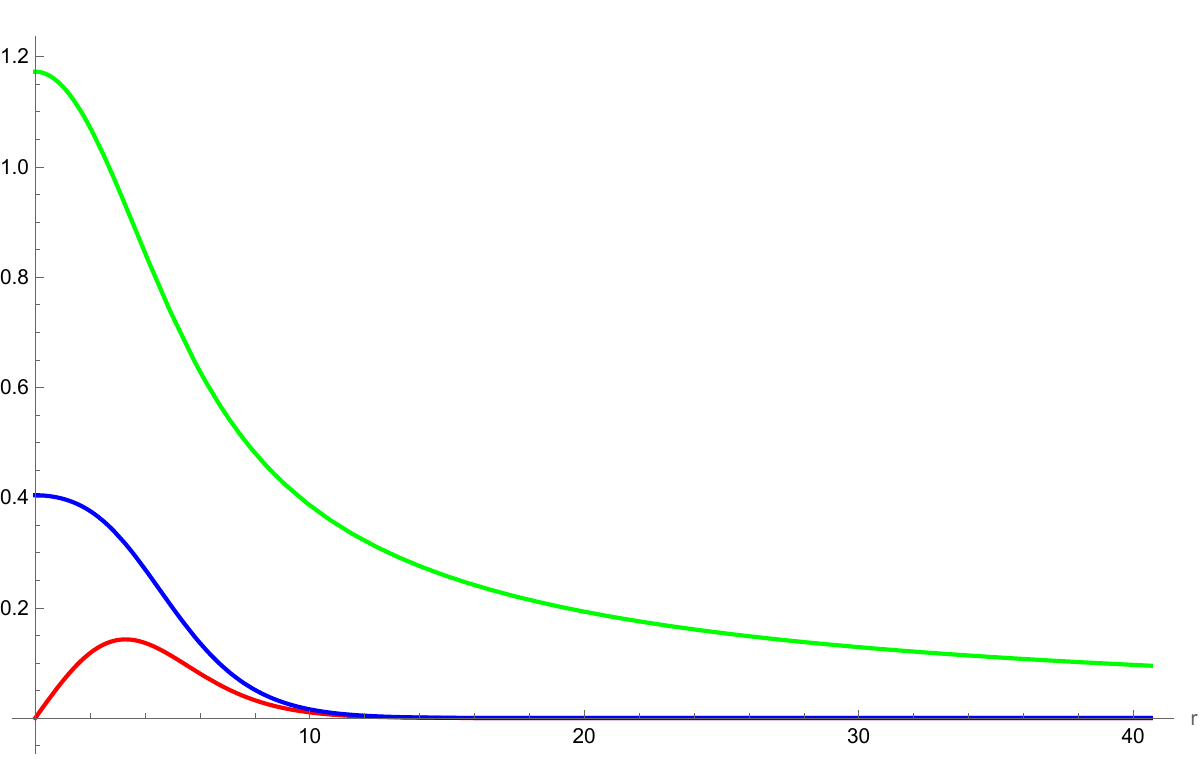}
\else
\noindent\includegraphics[width=0.46\textwidth,height=140pt]{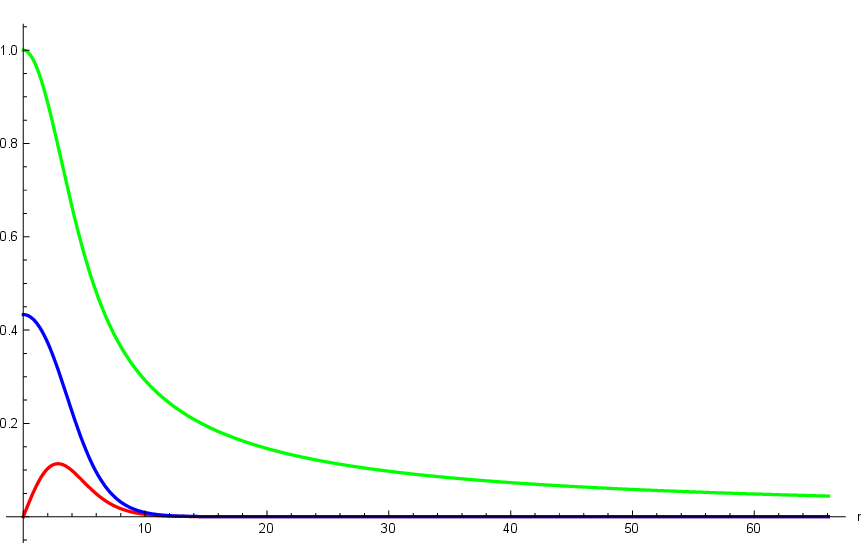}
\hfill\includegraphics[width=0.46\textwidth,height=140pt]{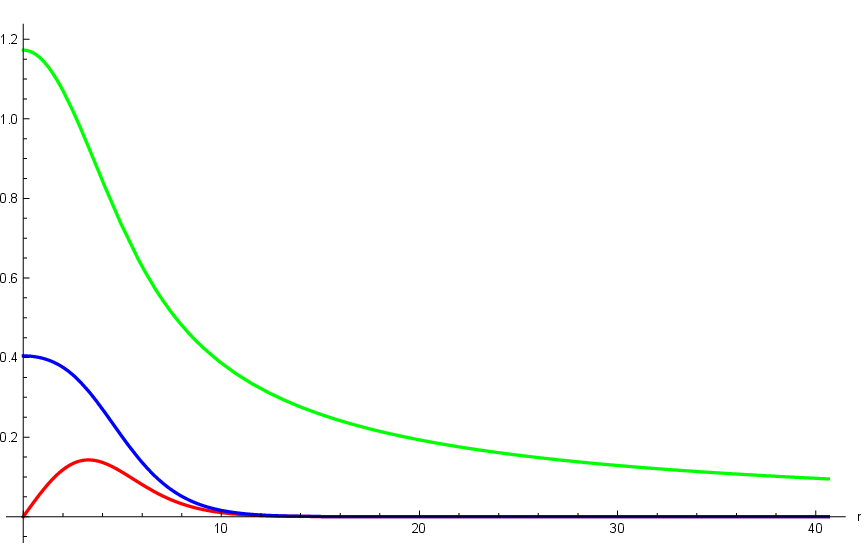}
\fi
 
\ifpdf
\noindent\includegraphics[width=0.46\textwidth,height=140pt]{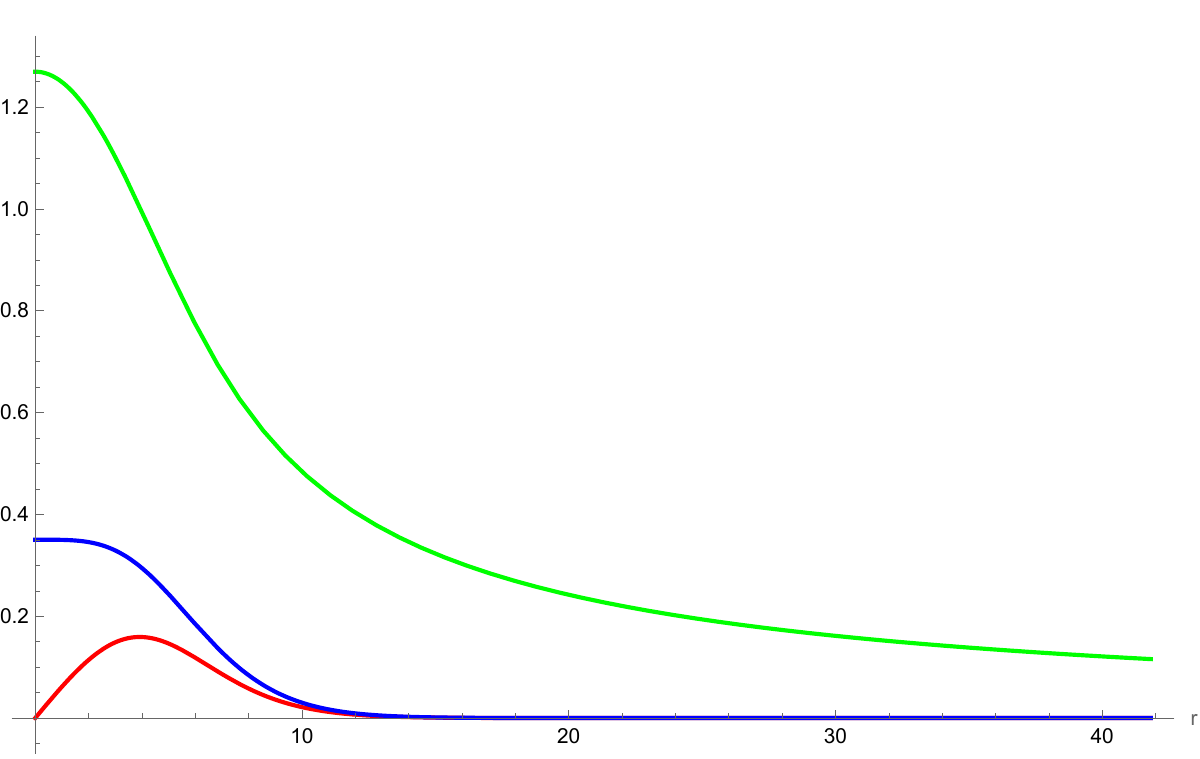}
\hfill\includegraphics[width=0.46\textwidth,height=140pt]{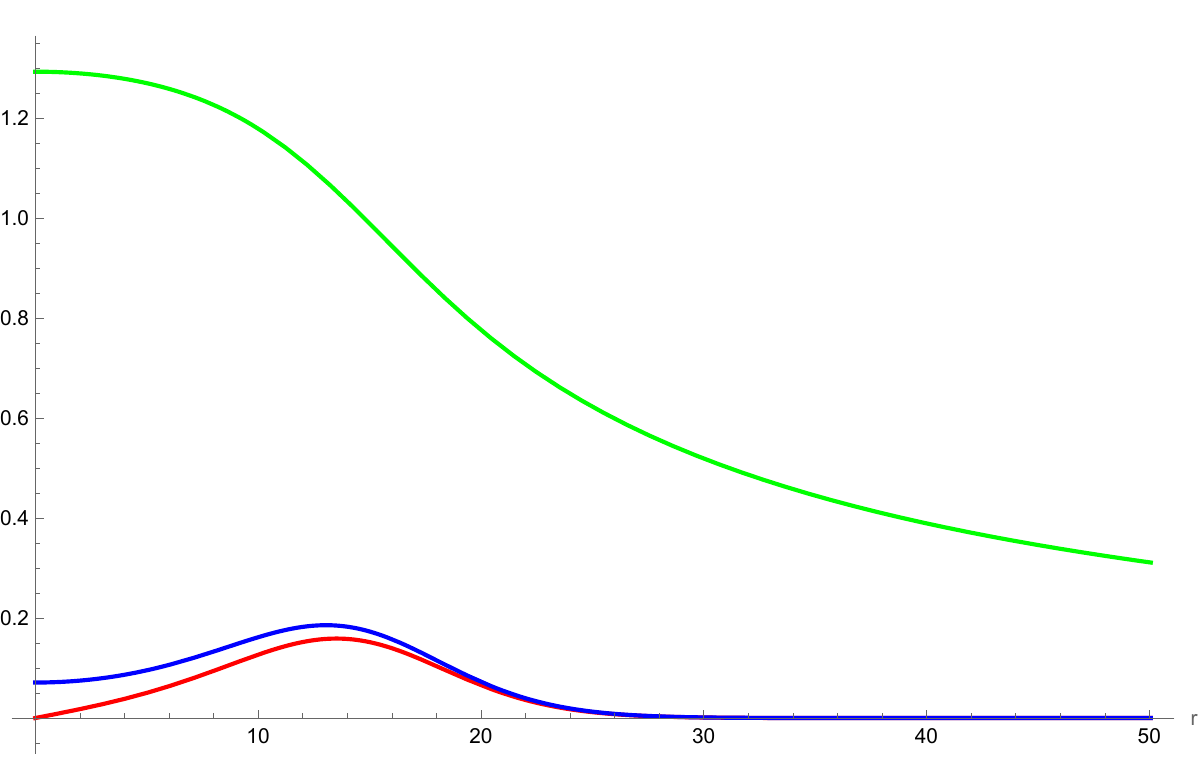}
\else
\noindent\includegraphics[width=0.46\textwidth,height=140pt]{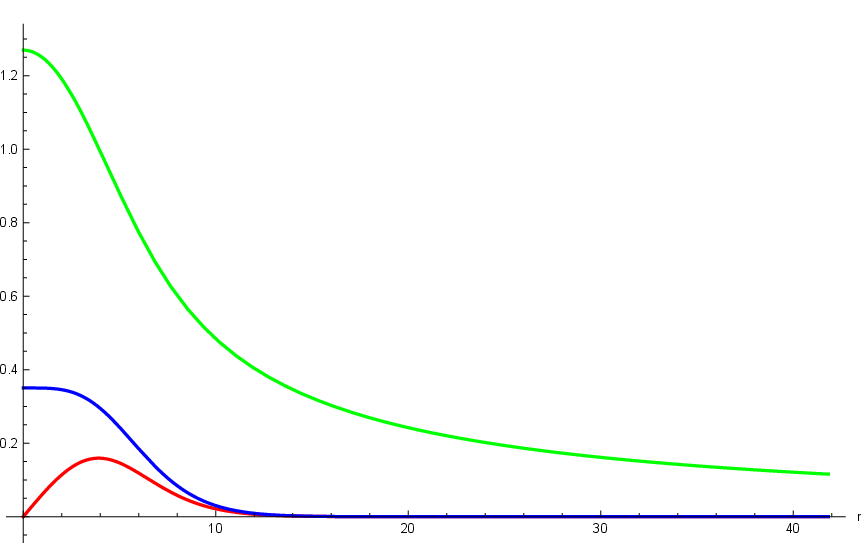}
\hfill\includegraphics[width=0.46\textwidth,height=140pt]{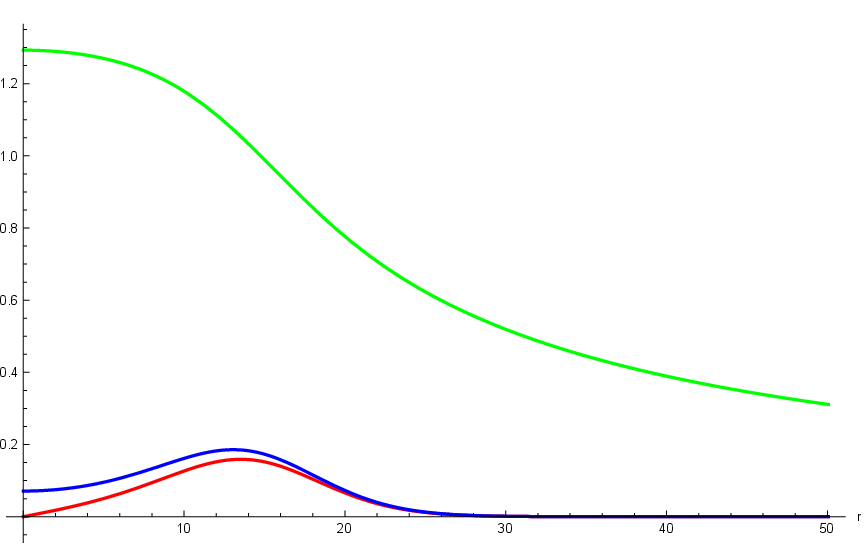}
\fi

\caption{
\footnotesize
3D,
$m=1$, $\coupling=1$, $M=0$;
the shooting method.
Profiles $v$ (blue), $u$ (red), and $h$ (green) of the solitary waves
corresponding to
${\omega}=0.903$ (top left),
${\omega}=0.640$ (top right),
${\omega}=0.457$ (center left),
${\omega}=0.345$ (center right),
${\omega}=0.269$ (bottom left),
and
$\omega=0.073$ (bottom right). The function $v$ is monotonic for $\omega\geq\omega_c\approx 0.269$ (this critical value corresponds to
$H(0)\approx1$; $H(0)=1$ being bottom left) and is otherwise increasing and then decreasing (as in bottom right).
We recall
that $\omega$
is obtained
from the
value $H(0)=s$
by computing it
from
$\omega_0$ and 
the effective mass $m_s$
(cf.~\eqref{def-ms})
so that it corresponds
to the chosen mass $m=1$;
see~\eqref{def-new-omega}.
}
    \label{Figure-profiles-3D-SimpleShootingMethod-M=0}
\end{figure}

\section{Discussion and
stability considerations}
\label{sect-stability}

Let us point out that
the two independent methods that we use -- the iterative method and the nested shooting method -- are in excellent agreement, collaborating one the other.
The iterative method works much faster
yielding the relative precision of
the order $10^{-4}$ or better
in just some five iterative steps.
On the negative side,
in 3D,
the subsequent repetition of iterations
does not seem to improve precision,
and moreover the iterative approach
breaks down for $\omega\lesssim 0.2m$.
The nested shooting method, while
taking longer,
allows us to improve the accuracy by
several orders of magnitude
for $\omega\lesssim 0.9m$
and also yields reliable
results for
small values of $\omega$
(albeit this demands delicate handling of
the intervals of initial data).

Let us now discuss which predictions
can already be made regarding stability
of DKG solitary waves.
One expects that
the stability of small amplitude solitary waves
(``weakly relativistic solitary waves'')
is inherited from the nonrelativistic limit of the model;
this was demonstrated in
\cite{comech2014linear,boussaid2019spectral}
for the NLD in any dimension
in the limit $\omega\lesssim m$.
As $\omega$ decreases,
critical points of
the charge $Q(\omega)$
(and also the energy $E(\omega)$
in view of the identity $E'(\omega)=\omega Q'(\omega)$)
and zero values of $E(\omega)$
indicate the collision of eigenvalues of the
linearization operators at the origin
(see \cite{berkolaiko2015vakhitov})
and hence
possibly the border of the stability region.
For example, the cubic NLD in 3D
has solitary waves
for $\omega\in(0,m)$
which are
linearly unstable in the
nonrelativistic limit
(the spectrum of the linearized operator
contains a pair of a positive and a negative
eigenvalues, just like the spectrum of 
cubic NLS in 3D),
and
according to \cite{PhysRevLett.116.214101}
at $\omega=0.936m$
-- the critical point where $Q'(\omega)$ vanishes -- 
this pair of real eigenvalues collides
at the origin and turns into a pair of
purely imaginary eigenvalues,
in agreement with
the Kolokolov stability criterion
\cite{kolokolov-1973}.
It was demonstrated numerically in
\cite{comech2025stable}
that indeed there is the onset
of spectral stability
below $\omega=0.936m$
(this situation was already predicted
via numerical simulations in
\cite{PhysRevLett.50.1230}).
Further, it was shown
in \cite{comech2025stable}
that the stability region
for cubic NLD in 3D
is
approximately
$0.254 m<\omega<0.936 m$,
before nonzero-real-part eigenvalues
start emerging by pairs or by quadruplets
from collisions of purely imaginary eigenvalues
which occur for some smaller values of $\omega$.

Since
3D DKG
in the limit $M/m\to+\infty$
of heavy scalar field
turns into cubic NLD,
we expect a very similar situation with the
spectral stability.
We expect to have linear instability
for $\omega$ near $m$, followed by
the collision at $z=0$
of eigenvalues of the linearization operator
at
$\omega_0$ which corresponds to
the minimum of $Q(\omega)$
(and also of $E(\omega)$
due to \eqref{dkg-dedq}), followed by
the onset of spectral stability
below $\omega_0$.
We have shown numerically
in Section~\ref{sect-dkg-numerics}
that
the critical value $\omega_0$
--
the minimum of $E(\omega)$
on Figure~\ref{fig-3d},
which corresponds to
the upper boundary of the stability region -- increases as $M$ decreases, with
$\omega_0\to m$ as $M\to 0$.
We do not observe vanishing of
$Q'(\omega)$ or $E(\omega)$
on $(0,\omega_0)$,
giving us the expectation that the
stability region could be rather
large,
like for
cubic NLD in 3D.
The numerical difficulties artifacts
that we encountered
both in the iterative method
(slow or absent
convergence of iterations)
and in the nested shooting method
(the need to accurately specify
the intervals for shooting parameters)
when $\omega$ decreases below
approximately $\omega=0.25 m$
suggest that the corresponding
solitary waves are unstable
in that region,
seemingly in agreement with the spectral instability observed in 3D NLD for these values of $\omega$ \cite{comech2025stable}.
At the same time,
though,
the likely linear instability for
$\omega$ near $m$
(when $E'(\omega)>0$)
does not affect
fast convergence of iterations.

Let us give a heuristic justification of
spectral stability of solitary waves
in the massless case $M=0$,
for $\omega\lesssim m$.
One can see from the second equation
in the system
\eqref{system-dkg-r}
(where we take $\coupling=1$)
that $\p_r v\approx -2m u$
(under the assumption that $h$ is much smaller
than $m$),
and then the first and the third
equations
from \eqref{system-dkg-r}
take the form
\begin{equation}\label{ttf}
    0=\frac{1}{2m}
\Bigl(\p_r^2+\frac{n-1}{r}\p_r\Bigr)
v
    -(m-\omega)v+hv,
    \qquad
-\Bigl(\p_r^2+\frac{n-1}{r}\p_r\Bigr)h
=v^2-u^2;
\end{equation}
we assume that $v$ and $u$ are real-valued.
To make sure that
the three terms in the right-hand
side of
the first equation
in \eqref{ttf}
are of the same order,
we consider $v$, $u$, $h$
as functions of $t=\epsilon r$,
$\epsilon:=\sqrt{m^2-\omega^2}$,
and define $H(t,\epsilon)$ by
$h(r,\epsilon)=\epsilon^2 H(\epsilon r,\epsilon)$,
so indeed $h$ will be much smaller
than $m$ if $H$ is bounded.
(We now consider
$v$, $u$, and $h$
as functions of $r$ and
$\epsilon=\sqrt{m^2-\omega^2}$
in place of $r$ and $\omega$.)
If we assume that
$\abs{v}\gg\abs{u}$,
then $\Delta h=v^2-u^2$
suggests that we introduce
$V(t,\epsilon)$
so that
\[
v(r,\epsilon)=\epsilon^2 V(\epsilon r,\epsilon)\,.
\]
It follows that
\[  
u(r,\epsilon)
\approx -\frac{1}{2m}\p_r v(r,\epsilon)
\approx
-\frac{\epsilon^3}{2m}
\p_t V(\epsilon r,\epsilon)\,,
\]
which is
indeed in agreement with our
assumption that
$u$ is much smaller than $v$,
and that
\[
\hat V(t)=\lim_{\epsilon\to 0}V(t,\epsilon)
\qquad \text{and} \qquad
\hat H(t)=\lim_{\epsilon\to 0}H(t,\epsilon)
\]
are to satisfy
the Schr\"odinger--Poisson system
\begin{equation}\label{sp}
-\frac{1}{2m}\hat V
=-\frac{1}{2m}
\Bigl(
\p_t^2+\frac{n-1}{t}\p_t\Bigr)
\hat V
-\hat H \hat V\,,
\qquad
-\Bigl(
\p_t^2+\frac{n-1}{t}\p_t\Bigr)
\hat H=\hat V^2\,,
\end{equation}
which is equivalent to the Choquard equation.
The rigorous construction of
$v(r,\omega)$ and $u(r,\omega)$
for $\omega$ near $m$ can be accomplished
precisely as in \cite{comech2018small}.
Since the solitary waves in
Schr\"odinger--Poisson system
\eqref{sp}
are orbitally stable
\cite{cazenave1982orbital},
one expects
that the spectral stability holds
for solitary wave solutions to
\eqref{dkg} with $M=0$ for $\omega$ close enough to $m$.

\section{Conclusion}

We constructed numerically
solitary waves
in Dirac--Klein--Gordon system in
1D and 3D
via the iteration procedure
and via the nested shooting method.
We showed that
for $\omega\gtrsim 0.5m$
the iterative procedure converges rather fast,
in just five
iterative steps,
with the relative error
of order $10^{-4}$
in the case of the massive
scalar field ($M=1$).
We were also able to implement
the nested shooting method
with respect to two parameters,
reducing the
relative error
by several orders of magnitude.
In the case of massless scalar field in 3D,
when the accuracy of the iterative
method is lacking,
we applied
the (standard, one-parameter) shooting method
which
becomes available in the case $M=0$.

We computed the dependencies
of $E$ and $Q$ on $\omega$.
Based on these dependencies
and the stability considerations in
\cite{berkolaiko2015vakhitov},
we expect that the
Dirac--Klein--Gordon solitary waves
are unstable for $\omega_0\le \omega<m$
and spectrally stable
for $\omega\lesssim \omega_0$,
with $\omega_0/m\to 0.936$ as $M/m\to+\infty$
(spinor field coupled to heavy scalar particles)
and $\omega_0/m\to 1$ as $M/m\to 0$ (coupling of the spinor field to very light or massless scalar particles).

\section*{Acknowledgements}

This work was supported by a grant from the Simons Foundation (851052, A.C.). JR is Fellow of \emph{Sistema Nacional de Investigadoras e Investigadores} and his research was partially supported by the DGAPA-UNAM's project PAPIIT IA102326.

\appendix

\section{Appendix:
the nonlinear Dirac equation}
\label{sect-nld}

The Soler model
\cite{jetp.8.260,PhysRevD.1.2766}
has the form
\begin{equation}\label{nld}
\jj\p\sb{t}\psi
=D_0\psi+(m-f(\psi^*\beta\psi))\beta\psi\,,
\qquad
\psi(t,x)\in\C^N,
\quad
x\in\R^n,
\end{equation}
where $D_0=-\jj\bm\alpha\cdot\nabla$;
the self-adjoint matrices
$\alpha^j$, $\beta$
satisfy \eqref{anti-alpha}
(for details about the nonlinear
Dirac equation
in higher dimensions, see e.g. \cite{boussaid2019nonlinear}).
The nonlinearity
is characterized by the function
$f\in C(\R,\R)\cap C^1(\R,\R\setminus\{0\})$,
$f(0)=0$.

We are interested in solitary wave solutions
$\phi(x,\omega)e^{-\jj\omega t}$
to \eqref{nld},
with $\phi$ satisfying
\begin{equation}\label{nld-stationary}
    \omega\phi=D_0\phi+\beta m\phi-f( \phi^*\beta\phi)\beta\phi\,.
\end{equation}
This stationary nonlinear Dirac equation
can be written in the form
\begin{equation}\label{nld-variational}
    \omega\delta_{\phi^*}Q = \delta_{\phi^*}E\,,
\end{equation}
with the energy functional
given by
\begin{equation}\label{def-E}
    E(\phi)
    =\int_{\R^n}
    \bigl(
    \phi^* D_0\phi
    +m \phi^*\beta\phi
    -F( \phi^*\beta\phi)
    \bigr)
    =K(\phi)+N(\phi)+V(\phi)\,,
\end{equation}
where
$K(\phi)$ and $N(\phi)$ are from
\eqref{def-kn}
and
\[
    V(\phi)
    =-\int_{\R^n}
    F( \phi^*\beta\phi)\,,
    \qquad
    F(\tau):=\int_0^\tau f(s)\,ds\,.
\]
We note that
for a $C^1$-family of solitary waves
$\phi(x,\omega)e^{-\jj\omega t}$,
$\omega\in\Omega\subset(-m,m)$,
one has
\[
    \p_\omega E(\phi) = \omega \p_\omega Q(\phi)\,,
    \qquad
    \omega\in\Omega\,.
\]
Indeed, using the relation
\eqref{nld-variational},
we derive (cf. \eqref{dkg-dedq}):
\[
    \p_\omega E(\phi)
    =
    2\Re\langle\delta_{\phi^*}E\,,
    \p_\omega
    \phi^*\rangle
    =2 \Re\langle\omega\delta_{\phi^*}Q,\p_\omega\phi^*\rangle
    =
    \omega\p_\omega Q(\phi)\,.
\]

If $\phi(x,\omega)e^{-\jj\omega t}$
is a solitary wave
with $\phi$ in the form \eqref{sw-1},
then by~\eqref{nld-stationary},
$v(r,\omega)$ and $u(r,\omega)$ satisfy the system
(see, e.g.,
\cite{esteban1995stationary,boussaid2017nonrelativistic})
\begin{equation}\label{nld-stationary-1}
    \begin{cases}
        \omega v=\p_r u+\frac{n-1}{r}u+(m-f(v^2-u^2))v, \\
        \omega u=-\p_r v-(m-f(v^2-u^2))u,
    \end{cases}
    \qquad
    r>0\,.
\end{equation}
In the case $n=1$,
the above system is considered
for $x\in\R$ instead of $r>0$;
cf. Remark~\ref{remark-nd-2}.

\subsubsection*{The virial identity
for the nonlinear Dirac equation}

Here is the virial identity for the
nonlinear Dirac equation
in $\R^n$, $n\ge 1$:
\begin{lemma}[
\mbox{\cite[Lemma IX.15]{boussaid2019nonlinear}}]
\label{Soler_Lemma_Virial_identity}

Let $F\in C^1(\R,\R)$ satisfy
$F(0)=0$.
Let $n\in\N$, $N\in\N$, $\omega\in\R$,
and assume that $\phi\in L\sp\infty\sb{\mathrm{loc}}(\R^n,\C^N)$ satisfies
\[
    \omega\phi=D_0\phi
    +m\beta\phi
    -F'( \phi^*\beta\phi)\beta\phi
\]
in the sense of distributions.
Assume furthermore that
$\phi\in H^1(\R^n,\C^N)$ and
$G( \phi^*\beta\phi)\in L^1(\R^n)$.
Then $\phi$ satisfies the \emph{virial identity}
\[
    \omega\int_{\R^n}\phi^*\phi
    =\frac{n-1}{n}\int_{\R^n}\phi^* D_0\phi
    +m\int_{\R^n}  \phi^*\beta\phi
    -\int_{\R^n} F( \phi^*\beta\phi)\,.
\]
\end{lemma}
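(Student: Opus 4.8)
The plan is to run a scaling (Pohozaev-type) argument, parallel to the derivation of the Dirac--Klein--Gordon virial identity \eqref{dkg-virial} above. I would introduce the action functional $\mathcal{S}(\phi)=E(\phi)-\omega Q(\phi)$, with $E=K+N+V$ as in \eqref{def-E} and $Q(\phi)=\int_{\R^n}\phi^*\phi$; by \eqref{nld-variational}, the assumption that $\phi$ solves \eqref{nld-stationary} is precisely the statement that $\phi$ is a critical point, $\delta_{\phi^*}\mathcal{S}(\phi)=0$. I would then consider the dilations $\phi_\lambda(x)=\phi(x/\lambda)$, $\lambda>0$, a smooth curve through $\phi$ at $\lambda=1$ with tangent $\p_\lambda\phi_\lambda\at{\lambda=1}=-x\cdot\nabla\phi$, and record the elementary scaling of each term obtained by the change of variables $y=x/\lambda$,
\[
Q(\phi_\lambda)=\lambda^n Q(\phi),\qquad K(\phi_\lambda)=\lambda^{n-1}K(\phi),\qquad N(\phi_\lambda)=\lambda^n N(\phi),\qquad V(\phi_\lambda)=\lambda^n V(\phi),
\]
the exponent $n-1$ on $K$ reflecting the single derivative carried by $D_0=-\jj\bm\alpha\cdot\nabla$.

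Next I would set $g(\lambda):=\mathcal{S}(\phi_\lambda)=\lambda^{n-1}K(\phi)+\lambda^n\big(N(\phi)+V(\phi)-\omega Q(\phi)\big)$ and differentiate, obtaining $g'(1)=(n-1)K+n\big(N+V-\omega Q\big)$. On the other hand, because $\lambda\mapsto\phi_\lambda$ passes through the critical point $\phi$, the same chain-rule manipulation used for $\p_\omega E$ in the appendix gives $g'(1)=2\Re\langle\delta_{\phi^*}\mathcal{S}(\phi),\p_\lambda\phi^*_\lambda\at{\lambda=1}\rangle=0$. Equating the two expressions and dividing by $n$ yields $\omega Q=\tfrac{n-1}{n}K+N+V$, which, after substituting $V=-\int_{\R^n}F(\bar\phi\phi)$, $N=m\int_{\R^n}\bar\phi\phi$, and the definitions of $K$ and $Q$, is exactly the asserted identity.

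\textbf{The hard part} will be justifying the vanishing $g'(1)=0$ at the stated regularity: the scaling relations above are mere changes of variables and need nothing beyond $\phi\in H^1$ together with the integrability hypothesis of the lemma, but the chain-rule step pairs the equation \eqref{nld-stationary} against $x\cdot\nabla\phi$, and for a distributional $H^1$ solution the weight $\abs{x}$ destroys global square-integrability, so $x\cdot\nabla\phi$ is not a priori an admissible test function. The rigorous route I would take is to test \eqref{nld-stationary} against the truncated field $\chi(x/R)\,x\cdot\nabla\phi$, with $\chi$ a smooth cutoff equal to $1$ near the origin, integrate by parts, and pass to the limit $R\to\infty$. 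The quadratic (Dirac and mass) terms converge using $\phi\in H^1$, while the boundary and cutoff-error contributions coming from the nonlinear term at $\abs{x}\sim R$ are controlled precisely by the $L^1$ integrability assumed in the hypotheses; this limiting argument, rather than the formal scaling bookkeeping, is where the analytic content of the lemma resides and is what upgrades the formal Pohozaev computation to a theorem valid for distributional solutions.
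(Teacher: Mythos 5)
Your scaling argument is exactly the paper's own derivation: the paper obtains the identity in precisely this way, by coupling \eqref{nld-variational} with $\p_\lambda\bar\phi_\lambda$ at $\lambda=1$ and using the scaling relations \eqref{nld-scaling} (Derrick's argument), while deferring the rigorous justification at the stated $H^1$ regularity to the cited monograph \cite{boussaid2019nonlinear}. Your closing cutoff discussion addresses the step the paper leaves to that citation and is the standard route; the one refinement worth noting is that for the Dirac term the pairing with $\chi(x/R)\,x\cdot\nabla\phi$ is not absolutely convergent uniformly in $R$ from $\phi\in H^1$ alone (the integrand is of size $\abs{x}\abs{\nabla\phi}^2$), so one must first use the cancellation produced by taking real parts and the self-adjointness of the $\alpha^j$, after which only terms of the form $\phi\,\nabla\phi$ with bounded weights survive and your limiting argument goes through.
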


A non-rigorous way to derive
this virial identity is
following \cite{derrick1964comments}:
define
\[
    \phi_\lambda(x,\omega)=\phi(x/\lambda,\omega)\,,
    \qquad
    x\in\R^n,
    \quad\omega\in(0,m)\,,
    \quad
    \lambda>0\,;
\]
coupling \eqref{nld-variational}
with $\p_\lambda\phi_\lambda^*$
(and
adding to the result its complex conjugate)
we arrive at
\begin{equation}\label{nld-ty}
    \omega
    \p_\lambda\at{\lambda=1}Q(\phi_\lambda)
    =
    \p_\lambda\at{\lambda=1}E(\phi_\lambda)
    =
    \p_\lambda\at{\lambda=1}
    \bigl(
    K(\phi_\lambda)
    +N(\phi_\lambda)
    +V(\phi_\lambda)
    \bigr).
\end{equation}
Taking into account that
\begin{equation}\label{nld-scaling}
    Q(\phi_\lambda)=\lambda^n Q(\phi)\,,
    \quad
    K(\phi_\lambda)=\lambda^{n-1} K(\phi)\,,
    \quad
    N(\phi_\lambda)=\lambda^n N(\phi)\,,
    \quad
    V(\phi_\lambda)=\lambda^n V(\phi)\,,
\end{equation}
we derive from~\eqref{nld-ty}
the \emph{virial identity} stated in Lemma~\ref{Soler_Lemma_Virial_identity}:
\begin{equation}\label{nld-virial-0}
    \omega Q(\phi) = \frac{n-1}{n} K(\phi) + N(\phi) + V(\phi) = E(\phi) - \frac{1}{n} K(\phi)\,.
\end{equation}

\begin{remark}\label{remark-nld-simple}
We note that, multiplying~\eqref{nld-stationary} by $\phi^*$ and integrating, one gets the relation
\begin{equation}\label{nld-easy}
    \omega Q(\phi)
    =K(\phi)+N(\phi)
    -\int_{\R^n}\phi^*\beta\phi\,f( \phi^*\beta\phi)\,.
\end{equation}
In the cubic case, $f(\tau)=\tau$ hence
$F(\tau)=\tau^2/2$,
\eqref{nld-easy}
takes the form
\[
    \omega Q(\phi)=K(\phi)+N(\phi)+2V(\phi)\,.
\]
The above and
\eqref{nld-virial-0} yield
$V(\phi)=-\frac{1}{n}K(\phi)$;
hence, in the cubic case,
\eqref{def-E} can be written as
\[
    E(\phi)=
    K(\phi)+N(\phi)+V(\phi)
    =N(\phi) - (n-1) V(\phi)\,,
\]
while the virial identity
\eqref{nld-virial-0}
takes the form
\begin{equation}\label{nld-virial-cubic}
\omega Q(\phi)=\frac{n-1}{n}K(\phi)+N(\phi)+V(\phi) = \frac{n-2}{n}K(\phi)+N(\phi) = N(\phi) - (n-2) V(\phi)\,.
\end{equation}
To check the accuracy of the numerics, we define
the error as the difference of the
left and right hand sides of the virial identity,
\begin{align}
\label{def-varepsilon-nld}
\varepsilon
=\omega Q(\phi)-\frac{n-1}{n}K(\phi)-N(\phi)-V(\phi).
\end{align}
\end{remark}

\section{Appendix:
Dirac--Klein--Gordon
system in 1D}
\label{sect_appendix_1D}

While the 1D case is unphysical,
it always serves as a convenient
playground for testing different approaches;
this has been the case for the
1D NLD, for both the
Gross--Neveu model (scalar self-interaction)
and massive Thirring model
(vector self-interaction);
we refer to the review of the subject in \cite{boussaid2019nonlinear}
and also to \cite{aldunate2023results}
for more recent results.
For this purpose,
we present here the energy
of solitary waves in 1D DKG
as a function of $\omega$.
We point out that
the energy $E(\omega)$
is positive and monotonically decreasing,
without zeros and critical points
This indicates that, just like
in the case of the cubic 1D NLD
considered in
\cite{berkolaiko2012spectral,berkolaiko2015vakhitov},
there are no collisions of eigenvalues
of the linearized equation at $z=0$
and one may expect (spectral) stability
of solitary waves
with all values of $\omega$.

\begin{figure}[!ht]
\begin{center}
\ifpdf
\noindent\includegraphics[width=0.44\textwidth,height=140pt]{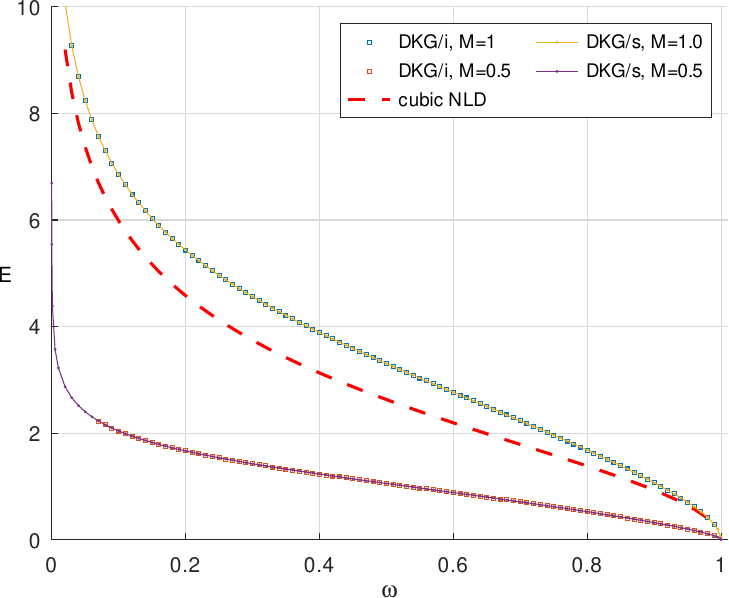}
\else
\noindent\includegraphics[width=0.44\textwidth,height=140pt]{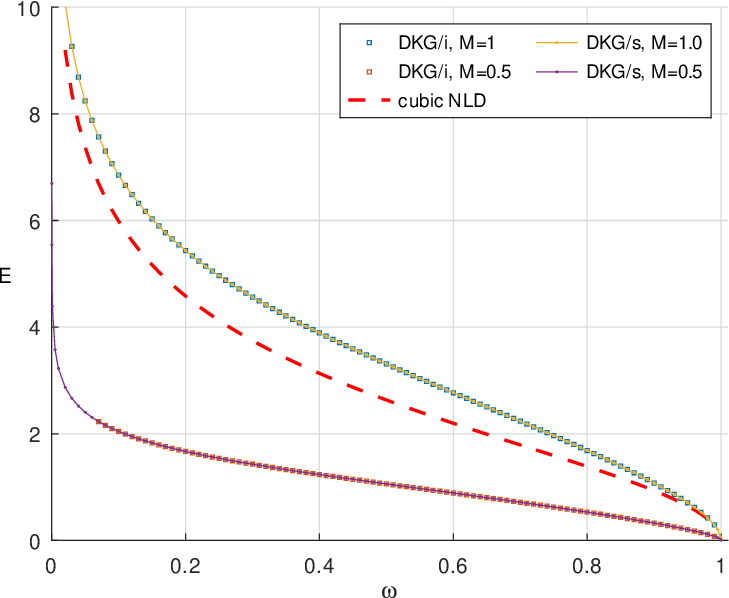}
\fi
\end{center}
\caption{\footnotesize
1D; $m=1$,
$\coupling=1$, $M=1$ and $M=1/2$.
Energy $E$
of solitary waves
as a functions of $\omega$
for the Dirac--Klein--Gordon system
computed by iterative and nested shooting methods.
The squares (labeled ``DKG/i'') correspond to the iterative method and the dots (labeled ``DKG/s'') to the nested shooting method.
The dashed line corresponds to the solitary waves of the cubic NLD.
}
\label{fig-1d}
\end{figure}
\begin{figure}[!ht]

\ifpdf
\noindent\includegraphics[width=0.44\textwidth,height=140pt]{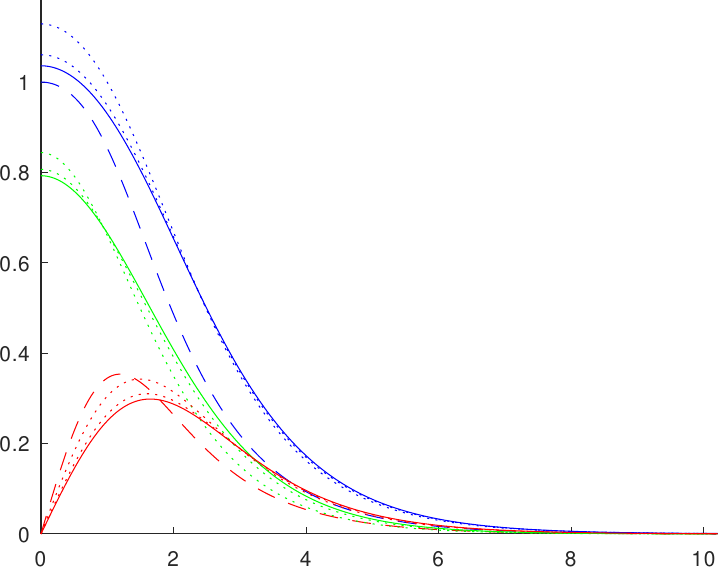}
\hfill\includegraphics[width=0.44\textwidth,height=140pt]{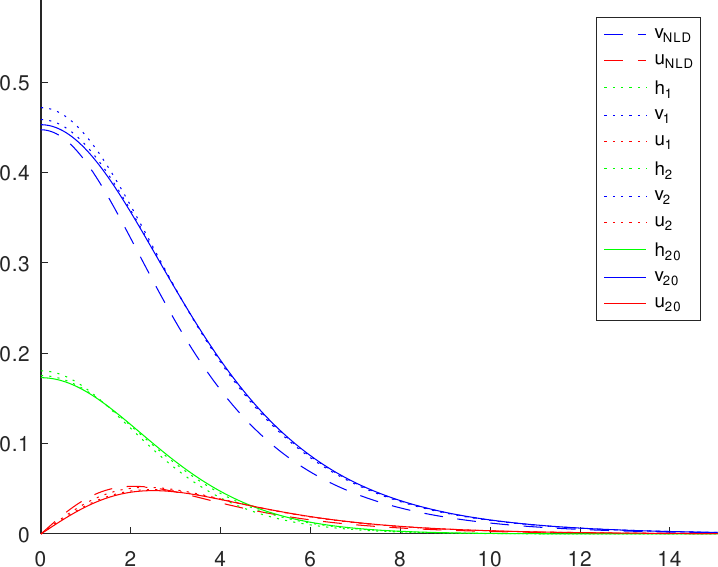}
\else
\noindent\includegraphics[width=0.44\textwidth,height=140pt]{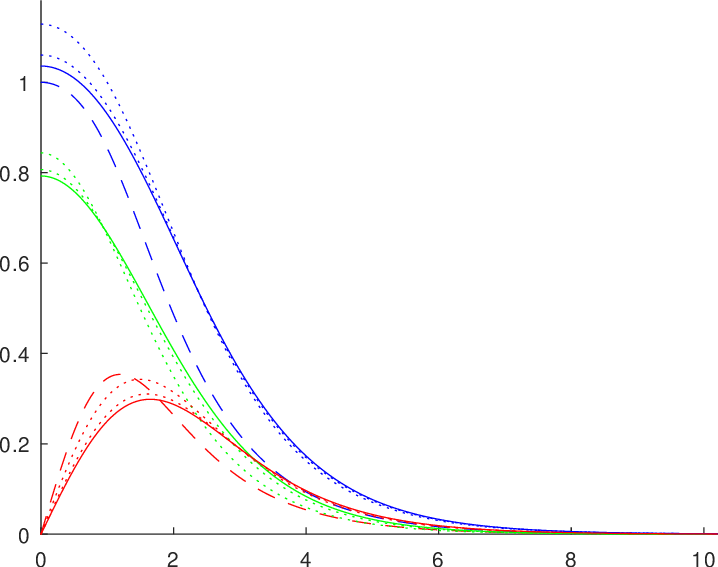}
\hfill\includegraphics[width=0.44\textwidth,height=140pt]{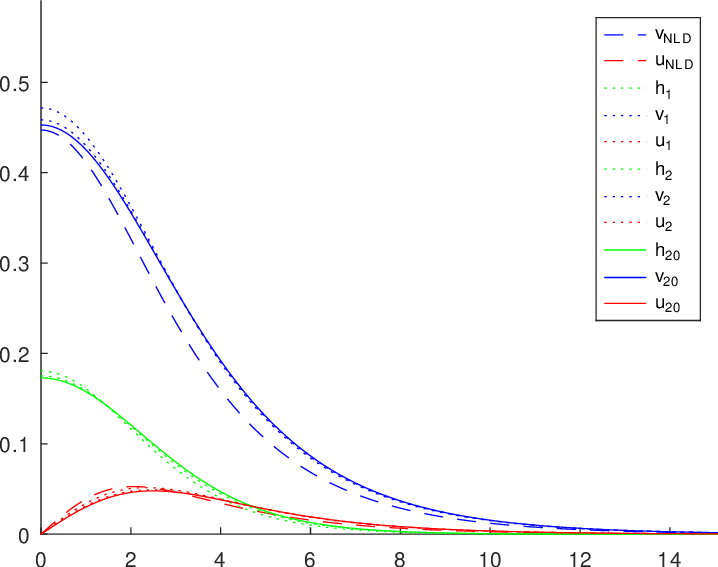}
\fi
\caption{\footnotesize
1D, the iterative method;
$m=1$,
$\coupling=1$, $M=1$;
$\omega=0.5$ (left)
and
$\omega=0.9$
(right).
Solitary waves of cubic NLD
(dashed)
and iterations
of solitary waves of
DKG system:
first and second (dotted)
and
twentieth
(solid lines)
iterations are plotted.
}
\label{fig-1-1.00}
\end{figure}

\clearpage

\section{Appendix: accuracy results}

\subsubsection*{Accuracy in 3D case}

\label{sect_appendix_Tables_3D}

\begin{table}[!ht]
\begin{center}
\caption{\footnotesize
3D, $m=1$, $\coupling=1$, $M=1$.
Accuracy of several steps
from the iterative method
and of the nested shooting method.
}
\scriptsize\begin{tabular}{ccccccc} \toprule
$\omega$ & step & $v(0)$ & $h(0)$ & $Q$ & $E$ & $|\varepsilon|/(\omega Q)$\\
\midrule
\multirow{8}{*}{$0.2$} & $\#\,000$ & $1.202271$ &  & $10695.4$ & $2934.38$ & $1.4\times10^{-2}$ \\
 & $\#\,001$ & $1.305827$ & $1.709858$ & $23692.2$ & $6406.64$ & $1.3\times10^{-1}$ \\
 & $\#\,002$ & $0.988337$ & $1.235453$ & $5222.40$ & $1508.87$ & $3.1\times10^{-1}$ \\
 & $\#\,006$ & $0.736237$ & $0.958079$ & $2247.44$ & $671.769$ & $1.0$ \\
 & $\#\,020$ & $0.590474$ & $0.803350$ & $1350.15$ & $413.161$ & $1.9$ \\
 & $\#\,060$ & $0.596420$ & $0.808268$ & $1363.56$ & $418.591$ & $1.9$ \\
 & $\#\,200$ & $0.592860$ & $0.805379$ & $1364.31$ & $417.104$ & $1.9$ \\
\cdashline{2-7}\noalign{\vskip 2pt}
&nested shootings & $1.192767$ & $1.515806$ & $11923.9$ & $3338.23$ & $1.4\times10^{-8}$ \\
\midrule
\multirow{8}{*}{$0.3$} & $\#\,000$ & $1.283529$ &  & $2491.12$ & $1019.14$ & $3.6\times10^{-4}$ \\
 & $\#\,001$ & $1.455864$ & $1.903212$ & $3829.79$ & $1588.62$ & $6.5\times10^{-2}$ \\
 & $\#\,002$ & $1.295538$ & $1.725898$ & $2410.60$ & $1019.31$ & $4.1\times10^{-2}$ \\
 & $\#\,006$ & $1.358431$ & $1.788225$ & $2815.69$ & $1187.02$ & $3.3\times10^{-3}$ \\
 & $\#\,020$ & $1.365813$ & $1.795567$ & $2869.09$ & $1208.90$ & $8.2\times10^{-4}$ \\
 & $\#\,060$ & $1.365813$ & $1.795540$ & $2869.05$ & $1208.88$ & $5.7\times10^{-4}$ \\
 & $\#\,200$ & $1.365813$ & $1.795521$ & $2868.80$ & $1208.79$ & $5.0\times10^{-4}$ \\
\cdashline{2-7}\noalign{\vskip 2pt}
&nested shootings & $1.365817$ & $1.795538$ & $2869.96$ & $1209.16$ & $7.2\times10^{-8}$ \\
\midrule
\multirow{8}{*}{$0.5$} & $\#\,000$ & $1.380579$ &  & $384.296$ & $254.890$ & $2.9\times10^{-4}$ \\
 & $\#\,001$ & $1.716580$ & $1.850420$ & $522.358$ & $357.012$ & $1.6\times10^{-2}$ \\
 & $\#\,002$ & $1.718751$ & $1.819921$ & $511.744$ & $351.638$ & $3.6\times10^{-3}$ \\
 & $\#\,006$ & $1.721600$ & $1.813302$ & $511.227$ & $351.656$ & $4.5\times10^{-4}$ \\
 & $\#\,020$ & $1.721604$ & $1.813295$ & $511.226$ & $351.656$ & $4.4\times10^{-4}$ \\
 & $\#\,060$ & $1.721604$ & $1.813295$ & $511.226$ & $351.656$ & $4.4\times10^{-4}$ \\
 & $\#\,200$ & $1.721604$ & $1.813295$ & $511.226$ & $351.656$ & $4.4\times10^{-4}$ \\
\cdashline{2-7}\noalign{\vskip 2pt}
&nested shootings & $1.721576$ & $1.813300$ & $511.479$ & $351.784$ & $4.9\times10^{-8}$ \\
\midrule
\multirow{8}{*}{$0.9$} & $\#\,000$ & $1.065072$ &  & $48.1815$ & $49.2058$ & $1.0\times10^{-4}$ \\
 & $\#\,001$ & $1.369138$ & $0.762320$ & $112.429$ & $113.385$ & $2.5\times10^{-2}$ \\
 & $\#\,002$ & $1.192838$ & $0.660407$ & $97.4420$ & $97.8913$ & $9.7\times10^{-3}$ \\
 & $\#\,006$ & $1.102409$ & $0.609214$ & $90.1148$ & $90.3264$ & $4.3\times10^{-4}$ \\
 & $\#\,020$ & $1.100615$ & $0.608188$ & $89.9879$ & $90.1938$ & $2.1\times10^{-4}$ \\
 & $\#\,060$ & $1.100615$ & $0.608188$ & $89.9879$ & $90.1938$ & $2.1\times10^{-4}$ \\
 & $\#\,200$ & $1.100615$ & $0.608188$ & $89.9879$ & $90.1938$ & $2.1\times10^{-4}$ \\
\cdashline{2-7}\noalign{\vskip 2pt}
&nested shootings & $1.100603$ & $0.608187$ & $90.1073$ & $90.3018$ & $1.0\times10^{-7}$ \\
\midrule
\multirow{8}{*}{$0.99$} & $\#\,000$ & $0.419441$ &  & $73.4064$ & $74.0519$ & $1.6\times10^{-5}$ \\
 & $\#\,001$ & $0.450900$ & $0.140742$ & $106.708$ & $107.489$ & $2.3\times10^{-3}$ \\
 & $\#\,002$ & $0.416766$ & $0.126528$ & $101.854$ & $102.531$ & $1.1\times10^{-3}$ \\
 & $\#\,006$ & $0.388921$ & $0.114851$ & $98.4323$ & $99.0284$ & $9.9\times10^{-5}$ \\
 & $\#\,020$ & $0.387057$ & $0.114066$ & $98.2297$ & $98.8205$ & $2.6\times10^{-5}$ \\
 & $\#\,060$ & $0.387057$ & $0.114065$ & $98.2297$ & $98.8204$ & $2.6\times10^{-5}$ \\
 & $\#\,200$ & $0.387057$ & $0.114065$ & $98.2297$ & $98.8204$ & $2.6\times10^{-5}$ \\
\cdashline{2-7}\noalign{\vskip 2pt}
&nested shootings & $0.387052$ & $0.114063$ & $98.3774$ & $98.9668$ & $1.6\times10^{-7}$ \\
\bottomrule
\end{tabular}
\label{table-3d-iterations}
\end{center}
\end{table}

We present the results for more values
of $\omega$ and $M$
in Tables~\ref{supp-table-3D-NestedshootingMethod-M=1}--\ref{supp-table-3D-IterativeMethod-M=0} in the Supplementary Material.

\clearpage

\subsubsection*{Accuracy in 1D case}

\begin{table}[!ht]
\begin{center}
\caption{\footnotesize
1D, $m=1$, $\coupling=1$, $M=1$.
Accuracy of several steps
from the iterative method
and of the nested shooting method.
}
\scriptsize\begin{tabular}{ccccccc} \toprule
$\omega$ & step & $v(0)$ & $h(0)$ & $Q$ & $E$ & $|\varepsilon|/(\omega Q)$\\
\midrule
\multirow{8}{*}{$0.1$} & $\#\,000$ & $1.341831$ &  & $19.9045$ & $5.98588$ & $5.3\times10^{-4}$ \\
 & $\#\,001$ & $1.474526$ & $1.640793$ & $24.5375$ & $7.65595$ & $6.1\times10^{-1}$ \\
 & $\#\,002$ & $1.405806$ & $1.566962$ & $22.5392$ & $7.14473$ & $2.5\times10^{-1}$ \\
 & $\#\,006$ & $1.365587$ & $1.523180$ & $21.4194$ & $6.85923$ & $6.5\times10^{-3}$ \\
 & $\#\,020$ & $1.364582$ & $1.522055$ & $21.3920$ & $6.85235$ & $4.5\times10^{-5}$ \\
 & $\#\,060$ & $1.364580$ & $1.522055$ & $21.3921$ & $6.85235$ & $2.2\times10^{-5}$ \\
 & $\#\,200$ & $1.364580$ & $1.522054$ & $21.3920$ & $6.85235$ & $9.7\times10^{-6}$ \\
\cdashline{2-7}\noalign{\vskip 2pt}
&nested shootings & $1.364582$ & $1.522072$ & $21.3944$ & $6.85258$ & $6.0\times10^{-7}$ \\
\midrule
\multirow{8}{*}{$0.3$} & $\#\,000$ & $1.183282$ &  & $6.35816$ & $3.74621$ & $2.8\times10^{-4}$ \\
 & $\#\,001$ & $1.334396$ & $1.200059$ & $8.74198$ & $5.20707$ & $3.2\times10^{-1}$ \\
 & $\#\,002$ & $1.253988$ & $1.144663$ & $7.92933$ & $4.74011$ & $1.0\times10^{-1}$ \\
 & $\#\,006$ & $1.221892$ & $1.121517$ & $7.62422$ & $4.56516$ & $1.1\times10^{-3}$ \\
 & $\#\,020$ & $1.221559$ & $1.121270$ & $7.62113$ & $4.56340$ & $5.1\times10^{-6}$ \\
 & $\#\,060$ & $1.221559$ & $1.121270$ & $7.62113$ & $4.56340$ & $5.1\times10^{-6}$ \\
 & $\#\,200$ & $1.221559$ & $1.121270$ & $7.62113$ & $4.56340$ & $6.5\times10^{-6}$ \\
\cdashline{2-7}\noalign{\vskip 2pt}
&nested shootings & $1.221557$ & $1.121280$ & $7.62235$ & $4.56381$ & $5.1\times10^{-8}$ \\
\midrule
\multirow{8}{*}{$0.5$} & $\#\,000$ & $1.000045$ &  & $3.46351$ & $2.63336$ & $4.9\times10^{-5}$ \\
 & $\#\,001$ & $1.128597$ & $0.843935$ & $4.94971$ & $3.74646$ & $1.6\times10^{-1}$ \\
 & $\#\,002$ & $1.060410$ & $0.806811$ & $4.52115$ & $3.41907$ & $4.7\times10^{-2}$ \\
 & $\#\,006$ & $1.036194$ & $0.792903$ & $4.37843$ & $3.31004$ & $3.7\times10^{-4}$ \\
 & $\#\,020$ & $1.036009$ & $0.792794$ & $4.37737$ & $3.30923$ & $1.1\times10^{-5}$ \\
 & $\#\,060$ & $1.036009$ & $0.792794$ & $4.37737$ & $3.30923$ & $9.8\times10^{-6}$ \\
 & $\#\,200$ & $1.036009$ & $0.792794$ & $4.37737$ & $3.30923$ & $1.1\times10^{-5}$ \\
\cdashline{2-7}\noalign{\vskip 2pt}
&nested shootings & $1.036008$ & $0.792800$ & $4.37814$ & $3.30966$ & $3.6\times10^{-8}$ \\
\midrule
\multirow{8}{*}{$0.9$} & $\#\,000$ & $0.447236$ &  & $0.96843$ & $0.93408$ & $1.1\times10^{-6}$ \\
 & $\#\,001$ & $0.471636$ & $0.180498$ & $1.16825$ & $1.12456$ & $1.2\times10^{-2}$ \\
 & $\#\,002$ & $0.458404$ & $0.175210$ & $1.13124$ & $1.08834$ & $3.6\times10^{-3}$ \\
 & $\#\,006$ & $0.452861$ & $0.172934$ & $1.11660$ & $1.07400$ & $3.5\times10^{-5}$ \\
 & $\#\,020$ & $0.452804$ & $0.172910$ & $1.11644$ & $1.07385$ & $2.0\times10^{-6}$ \\
 & $\#\,060$ & $0.452804$ & $0.172910$ & $1.11644$ & $1.07385$ & $2.0\times10^{-6}$ \\
 & $\#\,200$ & $0.452804$ & $0.172910$ & $1.11644$ & $1.07385$ & $2.0\times10^{-6}$ \\
\cdashline{2-7}\noalign{\vskip 2pt}
&nested shootings & $0.452805$ & $0.172911$ & $1.11666$ & $1.07405$ & $4.4\times10^{-8}$ \\
\midrule
\multirow{8}{*}{$0.99$} & $\#\,000$ & $0.141429$ &  & $0.28492$ & $0.28396$ & $2.4\times10^{-7}$ \\
 & $\#\,001$ & $0.142494$ & $0.019599$ & $0.29341$ & $0.29242$ & $1.9\times10^{-4}$ \\
 & $\#\,002$ & $0.141870$ & $0.019451$ & $0.29242$ & $0.29142$ & $7.2\times10^{-5}$ \\
 & $\#\,006$ & $0.141505$ & $0.019365$ & $0.29185$ & $0.29085$ & $1.4\times10^{-6}$ \\
 & $\#\,020$ & $0.141497$ & $0.019363$ & $0.29184$ & $0.29084$ & $9.8\times10^{-8}$ \\
 & $\#\,060$ & $0.141497$ & $0.019363$ & $0.29184$ & $0.29084$ & $9.8\times10^{-8}$ \\
 & $\#\,200$ & $0.141497$ & $0.019363$ & $0.29184$ & $0.29084$ & $9.8\times10^{-8}$ \\
\cdashline{2-7}\noalign{\vskip 2pt}
&nested shootings & $0.141497$ & $0.019363$ & $0.29189$ & $0.29090$ & $3.3\times10^{-9}$ \\
\bottomrule
\end{tabular}
\label{table-1d-iterations}
\end{center}
\end{table}

We present the
results for more values of $\omega$ and $M$
in Tables~\ref{supp-table-1D-NestedshootingMethod-M=1}--\ref{supp-table-1D-IterativeMethod-M=0.50} in the Supplementary Material.

\clearpage
\bibliographystyle{sima-doi}
\bibliography{bibcomech}
\end{document}